\newwrite\lafont
\write\lafont{%
font_identifier "LA"; font_size 14pt#;
font_coding_scheme:="LA text";
mode_setup;
boolean serifs,monospace;
xpt#:=pt#;
monospace:=false;
serifs:=true;
pair auf,ab,an;
auf=(1,7);
ab=(-1,-7);
an=(1,0.8);
ut#:=0.3;
ut=ut#*hppp;
numeric str_d,str_dp,str_dm,str_i;
numeric top,mid,base,bot,an_top,an_mid,an_bot,ang;
numeric kl_oben,gr_oben;
str_d:=1.8ut;         
str_i:=2.8ut;
str_dp:=2.3ut;
str_dm:=1.3ut;
pen va_pen;
va_pen := pencircle scaled str_d;  
pen va_pen_small;
va_pen_small := pencircle scaled str_dm;
pen va_pen_big;
va_pen_big := pencircle scaled str_dp;
top:=8ut; mid:=-3ut; bas:=0ut; bot:=0ut;
kl_oben:=19ut; gr_oben:=34ut;
an_top=16ut; an_mid=11ut; an_bot=6ut;
%
%
beginchar(oct"043", 11ut#, 19ut#, 0ut#);    
pickup va_pen;
z1=(0ut,19ut);
z2=(11ut,19ut);
draw z1{dir -70}..tension0.8..{dir 50}z2;
labels(range 1 thru 2);
endchar;
beginchar(oct"044", 12ut#, 19ut#, 0ut#);    
pickup va_pen;
z1=(0ut,19ut);
z2=(11ut,15ut);
draw z1{dir -70}..tension0.8..{dir 50}z2;
labels(range 1 thru 2);
endchar;
beginchar(oct"045", 9ut#, 19ut#, 0ut#);    
pickup va_pen;
z1=(0ut,19ut);
z2=(9ut,14ut);
draw z1{dir -70}..tension0.8..{dir 70}z2;
labels(range 1 thru 2);
endchar;
beginchar(oct"046", 6ut#, 19ut#, 0ut#);    
pickup va_pen;
z1=(0ut,19ut);
z2=(6ut,15ut);
draw z1{dir -70}..tension0.8..{dir 50}z2;
labels(range 1 thru 2);
endchar;
beginchar(133,8ut#,19ut#,0ut);   
pickup va_pen;
z1=(bot,an_bot);
z2=(top-1ut,an_top-1ut);     
draw z1{dir 55}..{dir 50}z2;
labels(range 1 thru 2);
endchar;
beginchar(134,6ut#,19ut#,0ut);   
pickup va_pen;
z1=(bot,an_bot);
z2=(top-2ut,an_top-3ut);          
draw z1{dir 55}..{dir 50}z2;
labels(range 1 thru 2);
endchar;
beginchar(146,7ut#,19ut#,0ut);   
pickup va_pen;
z1=(bot,an_bot);
z2=(top-1ut,an_top+3ut);
draw z1{dir 55}..{dir 70}z2;
labels(range 1 thru 2);
endchar;
beginchar(132,7ut#,19ut#,0ut);   
pickup va_pen;
z1=(bot,an_bot);
z2=(top-1ut,an_top-1ut);
draw z1{dir 55}..{dir 50}z2;
labels(range 1 thru 2);
endchar;
beginchar(127,13ut#,19ut#,0ut);   
pickup va_pen;                      
z1=(bot,an_bot - 6ut);
z2=(top+4.7ut,an_top -3ut);
draw z1{dir 50}..{dir 50}z2;
labels(range 1 thru 2);
endchar;
beginchar(128,8ut#,19ut#,0ut);   
pickup va_pen;                      
z1=(bot,an_bot - 6ut);
z2=(top-0ut,an_top - 10ut);
draw z1..{dir 45}z2;
labels(range 1 thru 2);
endchar;
beginchar(129,14ut#,19ut#,0ut);   
pickup va_pen;                      
z1=(bot,an_bot - 6ut);
z2=(top+6ut,an_top+3ut);
draw z1{dir 50}..{dir 70}z2;
labels(range 1 thru 2);
endchar;
beginchar(130,14ut#,19ut#,0ut);     
pickup va_pen;                        
z1=(bot,an_bot - 6ut);
z2=(top+5ut,an_top - 1ut);
draw z1{dir 45}..{dir 50}z2;
labels(range 1 thru 2);
endchar;
beginchar(154,6ut#,19ut#,0ut);   %
pickup va_pen;                      
z1=(bot,an_bot);
z2=(top-0ut,an_top - 10ut);
draw z1{right}..{dir 45}z2;
labels(range 1 thru 2);
endchar;
beginchar(131,8ut#,19ut#,0ut);     
pickup va_pen;
z1=(bot,an_bot);
z2=(top-1ut,an_top-1ut);
draw z1{dir 55}..{dir 50}z2;
labels(range 1 thru 2);
endchar;
beginchar(143,4ut#,19ut#,0ut);     
pickup va_pen;
z1=(bot    ,an_bot);
z2=(top-4ut,an_top-3ut);
draw z1{dir 55}..{dir 50}z2;
labels(range 1 thru 2);
endchar;
beginchar(144,8ut#,19ut#,0ut);     
pickup va_pen;
z1=(bot,an_bot);
z2=(top-0ut,an_top+3ut);
draw z1{dir 55}..{dir 70}z2;
labels(range 1 thru 2);
endchar;
beginchar(145,7ut#,19ut#,0ut);     
pickup va_pen;
z1=(bot,an_bot);
z2=(top-1ut,an_top-1ut);
draw z1{dir 55}..{dir 50}z2;
labels(range 1 thru 2);
endchar;
beginchar(153,2ut#,19ut#,0ut);   %
pickup va_pen;                      
z1=(bot-12ut,an_bot-6ut);
z2=(top-6ut,an_top - 10ut);
draw z1{right}..{dir 45}z2;
labels(range 1 thru 2);
endchar;
beginchar(139,1ut#,19ut#,0ut);     
pickup va_pen;
z1=(bot-20ut,an_bot-6ut);
z2=(top-8ut,an_top-1ut);
draw z1{right}..{dir 50}z2;
labels(range 1 thru 2);
endchar;
beginchar(140,1ut#,19ut#,0ut);     
pickup va_pen;
z1=(bot-20ut,an_bot-6ut);
z2=(top-6ut,an_top-2ut);
draw z1{right}..{dir 50}z2;
labels(range 1 thru 2);
endchar;
beginchar(141,2ut#,19ut#,0ut);     
pickup va_pen;
z1=(bot-20ut,an_bot-6ut);
z2=(top-6ut,an_top+3ut);
draw z1{right}..{dir 70}z2;
labels(range 1 thru 2);
endchar;
beginchar(142,2ut#,19ut#,0ut);     
pickup va_pen;
z1=(bot-18ut,an_bot-6ut);
z2=(top-6ut,an_top-1ut);
draw z1{right}..{dir 50}z2;
labels(range 1 thru 2);
endchar;
beginchar(148,6ut#,19ut#,0ut);   
pickup va_pen;                      
z1=(bot-5ut,an_bot - 6ut);
z2=(top-2ut,an_top - 10ut);
draw z1{dir 10}..{dir 45}z2;
labels(range 1 thru 2);
endchar;
beginchar(149,11ut#,19ut#,0ut);     
pickup va_pen;                        
z1=(bot-5ut,an_bot - 6ut);
z2=(top+1ut,an_top-3ut);
draw z1{dir 20}..{dir 50}z2;
labels(range 1 thru 2);
endchar;
beginchar(150,9ut#,19ut#,0ut);   
pickup va_pen;                      
z1=(bot-5ut,an_bot - 6ut);
z2=(top+0.7ut,an_top -3ut);
draw z1{dir 20}..{dir 50}z2;
labels(range 1 thru 2);
endchar;
beginchar(151,10ut#,19ut#,0ut);   
pickup va_pen;                      
z1=(bot-5ut,an_bot - 6ut);
z2=(top+2ut,an_top+3ut);
draw z1{dir 20}..{dir 70}z2;
labels(range 1 thru 2);
endchar;
beginchar(152,8ut#,19ut#,0ut);   
pickup va_pen;
z1=(bot-5ut,an_bot - 6ut);
z2=(top+0ut,an_top - 1ut);
draw z1{dir 20}..{dir 50}z2;
labels(range 1 thru 2);
endchar;
beginchar(147,0ut#,19ut#,0ut);   
pickup va_pen;
z1=(bot-20ut,an_bot-6ut);
z2=(top-8ut,an_top-10ut);
draw z1{right}..{dir 45}z2;
labels(range 1 thru 2);
endchar;
beginchar(135,1ut#,19ut#,0ut);     
pickup va_pen;
z1=(bot-23ut,an_bot-6ut);
z2=(top-8ut,an_top-1ut);
draw z1{right}..{dir 50}z2;
labels(range 1 thru 3);
endchar;
beginchar(136,4ut#,19ut#,0ut);     
pickup va_pen;
z1=(bot-23ut,an_bot-6ut);
z2=(top-4ut,an_top+3ut);
draw z1{right}..{dir 70}z2;
labels(range 1 thru 2);
endchar;
beginchar(137,1ut#,19ut#,0ut);     
pickup va_pen;
z1=(bot-20ut,an_bot-6ut);
z2=(top-7ut,an_top-1ut);
draw z1{right}..{dir 50}z2;
labels(range 1 thru 2);
endchar;
beginchar(138,1ut#,19ut#,0ut);     
pickup va_pen;
z1=(bot-23ut,an_bot-6ut);
z2=(top-6ut,an_top-2ut);
draw z1{right}..{dir 50}z2;
labels(range 1 thru 2);
endchar;
beginchar("1",14ut#,26ut#,0);
pickup va_pen;
x18=0 ut; y18=19ut;
x24=11ut; y24=26ut;
x28=5 ut; y28=0 ut;
draw z18--z24
  & z24{ab}--z28;
penlabels(18,24,28);
endchar;
beginchar("2",20ut#,26ut#,0);
pickup va_pen;
x22=17ut*.52; y22=0*.52;
x23=29ut*.52; y23=7 ut*.52;
x24=22ut*.52; y24=50ut*.52;
x28=6ut*.52; y28=7ut*.52;
x29=30ut*.52; y29=46ut*.52;
x34=6 ut*.52; y34=40ut*.52;
x38=22ut*.52; y38=22ut*.52;
x281=0ut*.52; y281=0ut*.52;
draw z34..z24{right}..z29..z38..z28{dir 225}..z281
 & z281{z28-z281}..z22{dir 330}..{an}z23;
penlabels(22,23,24,28,29,34,38,281);
endchar;
beginchar("3",20ut#,50ut#,0);
pickup va_pen;
x1=0ut*.52;   y1=4ut*.52;
x17=24ut*.52; y17=10ut*.52;
x19=12ut*.52; y19=27ut*.52;
x22=13ut*.52; y22=0 ut*.52;
x24=22ut*.52; y24=50ut*.52;
x34=6 ut*.52; y34=40ut*.52;
x38=22ut*.52; y38=22ut*.52;
draw z34..z24{right}..z19{left}
 & z19{right}..z38..z17..z22{left}..z1;
penlabels(1,17,19,22,24,34,38);
endchar;
beginchar("4",18ut#,50ut#,0);
pickup va_pen;
x4=27ut*.52;  y4=16ut*.52;
x22=13ut*.52; y22=0 ut*.52;
x24=15ut*.52; y24=50ut*.52;
x30=19ut*.52; y30=27ut*.52;
x32=16ut*.52; y32=-22ut*.52;
x56=-3ut*.52; y56=16ut*.52;
draw z24{ab}--z56
 & z56{dir 10}--z4;
draw z30--z22;
penlabels(4,22,24,30,32,56);
endchar;
beginchar("5",24ut#,50ut#,0);
pickup va_pen;
x2=8ut*.52;   y2=0ut*.52;
x31=0 ut*.52; y31=7 ut*.52;
x35=12ut*.52; y35=50ut*.52;
x38=28ut*.52; y38=22ut*.52;
x58=35ut*.52; y58=50ut*.52;
x66=2 ut*.52; y66=25ut*.52;
draw z35--z58;
draw z35--z66
 & z66{dir 20}..z38{down}..z2{left}..z31;
penlabels(2,31,35,38,58,66);
endchar;
beginchar("6",17ut#,50ut#,0);
pickup va_pen;
x23=21ut*.52; y23=8ut*.52;
x24=22ut*.52; y24=50ut*.52;
x30=15ut*.52; y30=27ut*.52;
x56=-5ut*.52; y56=15ut*.52;
x57=8ut*.52; y57=0 ut*.52;
draw z56..z30..z23..z57{left}..z56{up}..z24{an};
penlabels(23,24,30,56,57);
endchar;
beginchar("7",21ut#,50ut#,0);
pickup va_pen;
x19=12ut*.52; y19=27ut*.52;
x20=30ut*.52; y20=27ut*.52;
x28=6 ut*.52; y28=0 ut*.52;
x36=25ut*.52; y36=46ut*.52;
x37=32ut*.52; y37=50ut*.52;
x60=3 ut*.52; y60=43ut*.52;
x61=11 ut*.52; y61=50ut*.52;
draw z60..z61{right}..z36{right}..z37
 & z37--z28;
draw z19--z20;
penlabels(19,20,28,36,37,60,61);
endchar;
beginchar("8",22ut#,50ut#,0);
pickup va_pen;
x19=12ut*.52; y19=27ut*.52;
x23=25ut*.52; y23=8ut*.52;
x31=0 ut*.52; y31=9 ut*.52;
x33=11ut*.52; y33=-18ut*.52;
x34=6 ut*.52; y34=40ut*.52;
x38=22ut*.52; y38=22ut*.52;
x57=10ut*.52; y57=0 ut*.52;
x67=27ut*.52; y67=42ut*.52;
x68=18ut*.52; y68=50ut*.52;
x1=27ut*.52; y1=50ut*.52;
x2=31ut*.52; y2=42ut*.52;
x3=39ut*.52; y3=50ut*.52;
draw z31{up}..z19..z67{up}..z68{left}..z34{down}..z38..z23..z57{left}..z31{up};
draw z1{dir -70}..z2{right}..z3;
penlabels(19,23,31,33,34,38,57,67,68,1,2,3);
endchar;
beginchar("9",20ut#,50ut#,0);
pickup va_pen;
x1=0*.52;     y1=4ut*.52;
x3=29ut*.52;  y3=50ut*.52;
x11=32ut*.52; y11=50ut*.52;
x34=6 ut*.52; y34=40ut*.52;
x37=29ut*.52; y37=40ut*.52;
x38=25ut*.52; y38=29ut*.52;
x57=10ut*.52; y57=0 ut*.52;
x68=18ut*.52; y68=50ut*.52;
x69=15ut*.52; y69=22ut*.52;
x70=26ut*.52; y70=40ut*.52;
draw z68{left}..z34..z69{right}..z70..z68{left};
draw z3{ab}..z38{dir 255}..z57{left}..z1;
penlabels(1,3,11,34,37,38,57,68,69,70);
endchar;
beginchar("0",24ut#,50ut#,0);
pickup va_pen;
x1=30ut*.52; y1=50ut*.52;
x2=34ut*.52; y2=42ut*.52;
x3=42ut*.52; y3=50ut*.52;
x17=23ut*.52; y17=10ut*.52;
x20=30ut*.52; y20=41ut*.52;
x24=23ut*.52; y24=50ut*.52;
x46=28ut*.52; y46=22ut*.52;
x56=0ut*.52; y56=15ut*.52;
x57=11ut*.52; y57=0 ut*.52;
x58=36ut*.52; y58=50ut*.52;
draw z20..z24{left}..z56{down}..z57{right}..z17..z46..cycle;
draw z1{dir -70}..z2{right}..z3;
penlabels(1,2,3,17,20,24,46,56,57,58);
endchar;
beginchar("+",30ut#,20ut#, 0ut#);
pickup va_pen;
z1=(10ut,15ut);
z2=(25ut,15ut);
z3=(17ut,7ut);
z4=(18ut,23ut);
draw z1..z2; draw z3..z4;
labels(range 1 thru 4);
endchar;
beginchar("-",30ut#,20ut#, 0ut#);
pickup va_pen;
z1=(10ut,15ut);
z2=(25ut,15ut);
draw z1..z2;
labels(range 1 thru 2);
endchar;
beginchar(oct"025",35ut#,20 ut#, 0ut#);  
pickup va_pen_small;
z1=(10ut,15ut);
z2=(30ut,15ut);
draw z1..z2;
labels(range 1 thru 2);
endchar;
beginchar(oct"026",40ut#,20 ut#, 0ut#);   
pickup va_pen_small;
z1=(10ut,15ut);
z2=(35ut,15ut);
draw z1..z2;
labels(range 1 thru 2);
endchar;
beginchar("*",30 ut#,20 ut#, 0ut#);
pickup va_pen;
z9=(15ut,15ut);
z1=z9+(0ut,-8ut);
z2=z9+(8ut,0ut);
z3=z9+(0ut,8ut);
z4=z9+(-8ut,0ut);
z5=z9+(-5ut,-5ut);
z6=z9+(5ut,-5ut);
z7=z9+(5ut,5ut);
z8=z9+(-5ut,5ut);
draw z1..z3; draw z2..z4; draw z5..z7; draw z6..z8;
labels(range 1 thru 9);
endchar;
beginchar(":",15 ut#,20 ut#, 0ut#);
pickup pencircle scaled  str_i;
z1=(10ut,15ut);
z2=(9ut,5ut);
drawdot z1; drawdot z2;
labels(range 1 thru 2);
endchar;
beginchar("<",30 ut#,20 ut#,0 ut#);
pickup va_pen;
z1=(28ut,23ut);
z2=(10ut,15ut);
z3=(26ut,7ut);
draw z1..z2 & z2..z3;
labels(range 1 thru 3);
endchar;
beginchar(">", 30ut#, 20ut#, 0ut#);
pickup va_pen;
z1=(13ut,23ut);
z2=(30ut,15ut);
z3=(12ut,7ut);
draw z1..z2 & z2..z3;
labels(range 1 thru 3);
endchar;
beginchar("=", 30ut#,20 ut#, 0ut#);
pickup va_pen;
z1=(10ut,15ut);
z2=(25ut,15ut);
z3=(x1,y1-8ut);
z4=(x2,y2-8ut);
draw z1..z2  ; draw z3..z4;
labels(range 1 thru 4);
endchar;
beginchar("(", 13ut#, 34ut#, 0ut#);
pickup va_pen;
z1=(11ut,34ut);
z2=(3ut,-2ut);
draw z1{dir -115}..{dir -80}z2;
labels(range 1 thru 2);
endchar;
beginchar(")",13 ut#,34 ut#, 0ut#);
pickup va_pen;
z1=(11ut,34ut);
z2=(3ut,-2ut);
draw z1{dir -80}..{dir -115}z2;
labels(range 1 thru 2);
endchar;
beginchar("/",15 ut#,34 ut#, 0ut#);
pickup va_pen;
z1=(12ut,34ut);
z2=(3ut,-2ut);
draw z1..z2;
labels(range 1 thru 2);
endchar;
beginchar(oct"022",18 ut#,20 ut#, 7ut#);   
pickup va_pen;
z1=(5ut,3ut);
z2=(3ut,-7ut);
z3=(x1+8ut,y1);
z4=(x2+8ut,y2);
draw z1{dir -90}..z2;
draw z3{dir -90}..z4;
labels(range 1 thru 4);
endchar;
beginchar(oct"021", 18ut#, 36ut#, ut#);   
pickup va_pen;
z1=(5ut,36ut);
z2=(3ut,26ut);
z3=(x1+8ut,y1);
z4=(x2+8ut,y2);
draw z1{dir -90}..z2;  draw z3{dir -90}..z4;
labels(range 1 thru 4);
endchar;
beginchar(oct"20", 18ut#, 36ut#, ut#); 
pickup va_pen;
z1=(5ut,36ut);
z2=(3ut,26ut);
z3=(x1+8ut,y1);
z4=(x2+8ut,y2);
draw z1{dir -90}..z2;  draw z3{dir -90}..z4;
labels(range 1 thru 4);
endchar;
beginchar(oct"042", 18ut#, 36ut#, ut#); 
pickup va_pen;
z1=(5ut,36ut);
z2=(7ut,26ut);
z3=(x1+8ut,y1);
z4=(x2+8ut,y2);
draw z1{dir -90}..z2;  draw z3{dir -90}..z4;
labels(range 1 thru 4);
endchar;
beginchar(oct"140", 7ut#, 36ut#, ut#); 
pickup va_pen;
z1=(5ut,36ut);
z2=(7ut,26ut);
draw z1{dir -90}..z2;
labels(range 1 thru 2);
endchar;
beginchar(oct"000", 7ut#, 36ut#, ut#); 
pickup va_pen;
z1=(5ut,36ut);
z2=(7ut,26ut);
draw z1{dir -90}..z2;
labels(range 1 thru 2);
endchar;
beginchar(oct"004",18ut#,15ut#,0ut#);   
pickup va_pen_big;
z1=(-3ut,22ut);
z2=z1+(9ut,0ut);
z3=z1+(-1ut,-1ut);
z4=z2+(-1ut,-1ut);
draw z1..z3; draw z2..z4;
labels(range 1 thru 4);
endchar;
beginchar("?", 26ut#,34 ut#,0 ut#);
pickup va_pen;
z1=(10ut,30ut);
z2=(18ut,34ut);
z3=(23ut,30ut);
z4=(15ut,20ut);
z5=(8ut,10ut);
z6=(14ut,6ut);
z7=(21ut,10ut);
z8=(15ut,1ut);
draw z1..z2{right}..z3..tension1.4..z4..tension1.4..z5{down}..z6{right}..z7;
pickup pencircle scaled  str_i;
drawdot z8;
labels(range 1 thru 8);
endchar;
beginchar("!",16 ut#,34 ut#, 0ut#);
pickup va_pen;
z1=(16ut,32ut);
z2=(9ut,9ut);
z3=(8ut,2ut);
draw z1..z2;
pickup pencircle scaled  str_i;
drawdot z3;
labels(range 1 thru 3);
endchar;
beginchar(";",7 ut#,20 ut#,10 ut#);
pickup va_pen;
z1=(6ut,5ut);
z2=(6ut,1ut);
z3=(2ut,-7ut);
draw z2{dir -90}..z3{dir -110};
pickup pencircle scaled  str_i;
drawdot z1;
labels(range 1 thru 3);
endchar;
beginchar(".",6 ut#,10 ut#,0 ut#);
pickup pencircle  scaled str_i;
z1=(5ut,1ut);
drawdot z1;
labels(range 1 thru 2);
endchar;
beginchar(",", 7ut#, 10ut#,7 ut#);
pickup va_pen;
z1=(6ut,3ut);
z2=(2ut,-7ut);
drawdot z1;
draw z1+(1ut,0ut){dir -80}..z2{dir -125};
labels(range 1 thru 2);
endchar;
beginchar(oct"47", 7ut#, 40ut#,0 ut#);  
pickup va_pen;
z1=(6ut,35ut);
z2=(2ut,27ut);
drawdot z1;
draw z1+(1ut,0ut){dir -80}..z2{dir -125};
labels(range 1 thru 8);
endchar;
beginchar(oct"001", 7ut#, 40ut#,0 ut#);  
pickup va_pen;
z1=(6ut,35ut);
z2=(2ut,27ut);
drawdot z1;
draw z1+(1ut,0ut){dir -80}..z2{dir -125};
labels(range 1 thru 8);
endchar;
beginchar("a",18 ut#,19 ut#,0);       
pickup va_pen;
z1=(6ut,19ut);
z2=(-4.5ut,8ut);
z3=(0ut,0ut);
z4=(10ut,13ut);
z5=(13ut,19ut);
z6=(8ut,5ut);
z7=(10ut,0ut);
z8=(18ut,6ut);
draw z1{dir 190}..{dir 250}z2{dir 250}..z3{right}..{dir 70}z4{dir 70}..cycle;
draw z5--z6{z6-z5}..z7{dir 10}..{dir 55}z8;
labels(range 1 thru 8);
endchar;
beginchar(oct"344",18 ut#,30ut#,0);       
pickup va_pen;
z1=(6ut,19ut);
z2=(-4.5ut,8ut);
z3=(0ut,0ut);
z4=(10ut,13ut);
z5=(13ut,19ut);
z6=(8ut,5ut);
z7=(10ut,0ut);
z8=(18ut,6ut);
z9=(4ut,24ut);
z10=(9ut,24ut);
z11=(6.0ut,30ut);
z12=(11.0ut,30ut);
draw z1{dir 190}..{dir 250}z2{dir 250}..z3{right}..{dir 70}z4{dir 60}..cycle;
draw z5--z6{z6-z5}..z7{dir 10}..{dir 55}z8;
draw z9--z11;
draw z10--z12;
labels(range 1 thru 12);
endchar;
beginchar("b",12 ut#,34 ut#,0);
pickup va_pen;
z1=(0ut,13ut);
z2=(12ut,31ut);
z3=(9ut,34ut);
z4=(4ut,0ut);
z5=(12ut,19ut);
z6=(25ut,15ut);
z7=(3ut,23ut);
draw z1{dir 50}..z2{up}..z3{dir 220}..z7..{dir 260}z1{dir 260}
     ..z4{dir 10}..z5{dir 110};
%
%
labels(range 1 thru 7);
endchar;
beginchar("c",10 ut#,19 ut#,0);
pickup va_pen;
z1=(10ut,18ut);
z2=(6ut,19ut);
z3=(-4.5ut,8ut);
z4=(2ut,0ut);
z5=(10ut,6ut);
draw z1..z2{dir 190}..{dir 250}z3{dir 250}..z4{dir 10}..{dir 55}z5;
labels(range 1 thru 5);
endchar;
beginchar("d",18 ut#,34 ut#,0);
pickup va_pen;
z1=(6ut,19ut);
z2=(-4.5ut,8ut);
z3=(0ut,0ut);
z4=(11ut,13ut);
z5=(18ut,34ut);
z6=(9ut,5ut);
z7=(10ut,0ut);
z8=(25ut,6ut);
z9=(18ut,6ut);
draw z1{dir 190}..{dir 250}z2{dir 250}..z3{right}..z4{dir 70}..cycle;
draw z5--z6{z6-z5}..z7{dir 10}..z9{dir 55};
labels(range 1 thru 9);
endchar;
beginchar("e",13 ut#,19 ut#,0);
pickup va_pen;
z1=(6ut,19ut);
z2=(4ut,0ut);
z3=(13ut,6ut);
z4=(0ut,6ut);
z5=(7ut,12ut);
z6=(0ut,7ut);
draw z4{dir 45}..z5
     ..z1{dir 210}..{dir 250}z6{dir 250}..z2{dir 10}..z3{dir 55};
labels(range 1 thru 6);
endchar;
beginchar(oct"037",21 ut#,19 ut#,0);  
pickup va_pen;
z1=(15ut,19ut);
z2=(7ut,12ut);
z3=(12ut,0ut);
z4=(21ut,6ut);
z5=(14ut,10ut);
z6=(18ut,14ut);
z7=(8ut,8ut);
z8=(0ut,19ut);
draw z8{dir -70}..z2..z5..z6..z1{left}
     ..{dir 260}z7{dir 260}..z3{right}..z4{dir 55};
labels(range 1 thru 8);
endchar;
beginchar("f",9 ut#,34 ut#,15ut#);
pickup va_pen;
z1=(0ut,13ut);
z2=(12ut,32ut);
z3=(9ut,34ut);
z4=(4ut,26ut);
z5=(-7ut,-15ut);
z6=(-4ut,5ut);
z7=(3ut,2ut);
z8=(9ut,6ut);
draw z1{dir 50}..z2{up}..z3{dir 220}...z4{z5-z4}--z5;
draw z6..z7...z8{dir 55};
labels(range 1 thru 8);
endchar;
beginchar("g",17 ut#,19 ut#,15ut#);
pickup va_pen;
z1=(11ut,13ut);
z2=(0ut,0ut);
z3=(-4.5ut,8ut);
z4=(6ut,19ut);
z10=(13ut,19ut);
z11=(5ut,-6ut);
z12=(0ut,-15ut);
z13=(-3ut,-11ut);
z14=(1ut,-5ut);
z15=(17ut,6ut);
draw z1{dir 250}..z2{left}..{dir 70}z3{dir 70}..{dir 10}z4..cycle;
draw z10--z11{z11-z10}..z12{left}..z13{up}..z14...z15{dir 55};  
labels(range 1 thru 15);
endchar;
beginchar("h",20 ut#,34 ut#,0);
pickup va_pen;
z1=(0ut,13ut);
z2=(12ut,31ut);
z3=(9ut,34ut);
z4=(3ut,23ut);
z5=(-4ut,0ut);
z6=(11ut,17ut);
z7=(13ut,15ut);
z8=(10ut,3ut);
z9=(12ut,0ut);
z10=(20ut,6ut);
draw z1{dir 50}..z2{up}..z3{dir 220}..z4..z1--z5
  & z5{dir 70}..tension1.3..z6{right}..z7{down}..z8{z8-z7}..z9{dir 10}..z10{dir 55};
labels(range 1 thru 10);
endchar;
beginchar("i",5 ut#,19 ut#,0);
pickup va_pen;
z1=(0ut,19ut);
z2=(-5ut,4ut);
z3=(-3ut,0ut);
z4=(5ut,6ut);
z5=(2.5ut,28ut);
draw z1--z2{z2-z1}..z3{dir 10}..z4{dir 55};
pickup pencircle scaled  str_i;
drawdot z5;
labels(range 1 thru 5);
endchar;
beginchar("j",7 ut#,19 ut#,15ut#);
pickup va_pen;
z1=(2.5ut,28ut);
z2=(0ut,19ut);
z3=(-4ut,0ut);
z4=(-12ut,-15ut);
z5=(7ut,6ut);
draw z2--z3{z3-z2}...z4{left}..z3{z5-z3}...{dir 55}z5;
pickup pencircle scaled  str_i;
drawdot z1;
labels(range 1 thru 5);
endchar;
beginchar("k", 19 ut#, 34ut#,0);
pickup va_pen;
z1=(0ut,13ut);
z2=(12ut,31ut);
z3=(9ut,34ut);
z4=(3ut,23ut);
z5=(-4ut,0ut);
z6=(11ut,17ut);
z7=(13ut,15ut);
z8=(3ut,9ut);
z9=(11ut,0ut);
z10=(19ut,6ut);
z11=(16ut,2ut);
draw z1{dir 50}..z2{up}..z3{dir 220}...z4{z5-z4}--z5
  & z5{dir 70}..tension1.3..z6{right}..z7{down}..z8
  & z8{right}..z9{dir 10}..z10{dir 55};
labels(range 1 thru 11);
endchar;
beginchar("l",9 ut#,34 ut#,0);
pickup va_pen;
z1=(0ut,13ut);
z2=(12ut,31ut);
z3=(9ut,34ut);
z4=(1ut,0ut);
z5=(9ut,6ut);
z6=(7ut,2ut);
z7=(3ut,23ut);
draw z1{dir 50}..z2{up}..z3{dir 220}..z7{z1-z7}..z1{z1-z7}    
     ..{right}z4{dir 10}..z5{dir 55};
labels(range 1 thru 7);
endchar;
beginchar("m",36 ut#,19 ut#,0);
pickup va_pen;
z1=(5ut,15ut);
z2=(0ut,0ut);
z3=(16.5ut,19ut);
z4=(18.5ut,16ut);
z5=(13.5ut,0ut);
z6=(29ut,19ut);
z7=(31ut,16ut);
z8=(27ut,3ut);
z9=(28ut,0ut);
z10=(36ut,6ut);
z11=(0ut,15ut);
z12=(4ut,19ut);
draw z11{dir 50}..{curl4}z12{curl6}..{z2-z1}z1--z2;
draw z2{z1-z2}..tension1.3..{curl2}z3{curl4}..{z5-z4}z4--z5;
draw z5{z4-z5}..tension1.3..{curl2}z6{curl4}..{z8-z7}z7
      --z8{z8-z7}..z9{dir 10}..z10{dir 55};
labels(range 1 thru 12);
endchar;
beginchar("n",24 ut#,19 ut#,0);
pickup va_pen;
z1=(5ut,15ut);
z2=(0ut,0ut);
z3=(17ut,19ut);
z4=(19.3ut,16ut);
z5=(15ut,3ut);
z6=(16ut,0ut);
z7=(24ut,6ut);
z8=(0ut,15ut);
z9=(4ut,19ut);
draw z8{dir 50}..{curl4}z9{curl6}..{z2-z1}z1--z2;
draw z2{z1-z2}..tension1.3..{curl2}z3{curl4}
     ..{z5-z4}z4--z5{z5-z4}..z6{dir 10}..z7{dir 55};
labels(range 1 thru 9);
endchar;
beginchar("o",11 ut#,19 ut#,0);
pickup va_pen;
z1=(11ut,16ut);
z2=(0ut,0ut);
z3=(-4.5ut,8ut);
z4=(6ut,19ut);
%
draw z1{dir -75}..z2{left}..{dir 70}z3{dir 70}
     ..{dir 10}z4..cycle;
labels(range 1 thru 4);
endchar;
beginchar(oct"366",11 ut#,30 ut#,0);       
pickup va_pen;
z1=(11ut,16ut);
z2=(0ut,0ut);
z3=(-4.5ut,8ut);
z4=(6ut,19ut);
z5=(6ut,24ut);
z6=(11ut,24ut);
z7=(8.0ut,30ut);
z8=(13.0ut,30ut);
draw z1{dir -75}..z2{left}..{dir 70}z3{dir 70}
     ..{dir 10}z4..cycle;
draw z5--z7;
draw z6--z8;
labels(range 1 thru 8);
endchar;
beginchar("p",18 ut#,19 ut#,15ut#);
pickup va_pen;
z1=(0ut,19ut);
z2=(-10ut,-15ut);
z3=(-4.5ut,4ut);
z4=(11.5ut,19ut);
z5=(13.0ut,15ut);
z6=(9ut,4ut);
z7=(10ut,0ut);
z8=(17ut,1ut);
z9=(18ut,6ut);
z10=(6ut,16ut);
draw z1--z2;
draw z3{dir 50}..z10..z4{right}..z5{z6-z5}--z6{z6-z5}..z7{dir 10}..z9{dir 55};
labels(range 1 thru 10);
endchar;
beginchar("q",23 ut#,19 ut#,15ut#);
pickup va_pen;
z1=(9ut,13ut);
z2=(0ut,0ut);
z3=(-4.5ut,8ut);
z4=(6ut,19ut);
z5=(11ut,19ut);
z6=(1ut,-15ut);
z7=(6ut,0ut);
z8=(23ut,19ut);
draw z1{dir 250}..z2{left}..{dir 70}z3{dir 70}..{dir 10}z4..cycle;
draw z5--z6;
draw z7{dir 40}..z8{dir 70};
labels(range 1 thru 8);
endchar;
beginchar("r", 10ut#,19 ut#,0);  
pickup va_pen;
z1=(5ut,15ut);
z2=(4ut,11ut);    
z3=(10ut,19ut);
z4=(0ut,0ut);
z5=(0ut,15ut);
z6=(4ut,19ut);
draw z5{dir 50}..{curl4}z6{curl6}..{z4-z1}z1--z4;
draw z2..z3;
labels(range 1 thru 6);
endchar;
beginchar("s",4 ut#,19 ut#,0);
pickup va_pen;
z1=(-10ut,7ut);
z2=(0ut,19ut);
z3=(3ut,6ut);
z4=(-2ut,0ut);
z5=(-6ut,2ut);
draw z2{dir 260}..tension0.8..z3{down}..z4{left}..z5;
labels(range 1 thru 5);
endchar;
beginchar("t",7 ut#,34 ut#,0);
pickup va_pen;
z1=(4ut,27ut);
z2=(-3.5ut,3ut);
z3=(-1ut,0ut);
z4=(7ut,6ut);
z5=(-4ut,18ut);
z6=(7ut,18ut);
draw z1..z2{z2-z1}..z3{dir 10}..z4{dir 55};
draw z5--z6;
labels(range 1 thru 6);
endchar;
beginchar("u",19 ut#,19 ut#,0);
pickup va_pen;
z1=(0ut,19ut);
z2=(-5ut,3ut);
z3=(0ut,0ut);
z4=(14ut,19ut);
z5=(9ut,3ut);
z6=(11ut,0ut);
z7=(19ut,6ut);
z8=(17ut,2ut);
draw z1--z2{z2-z1}..z3{dir 20}..tension1.3..z4{z4-z5}
  & z4--z5{z5-z4}..z6{dir 10}..{dir 55}z7;
labels(range 1 thru 8);
endchar;
beginchar(oct"374",19 ut#,30 ut#,0);       
pickup va_pen;
z1=(0ut,19ut);
z2=(-5ut,3ut);
z3=(0ut,0ut);
z4=(14ut,19ut);
z5=(9ut,3ut);
z6=(11ut,0ut);
z7=(19ut,6ut);
z8=(6ut,24ut);
z9=(11ut,24ut);
z10=(7.5ut,30ut);
z11=(12.5ut,30ut);
draw z1--z2{z2-z1}..z3{dir 20}..tension1.3..z4{z4-z5}
  & z4--z5{z5-z4}..z6{dir 10}..{dir 55}z7;
draw z8--z10;
draw z9--z11;
labels(range 1 thru 12);
endchar;
beginchar("v",17 ut#,19 ut#,0);
pickup va_pen;
z1=(7ut,15ut);
z2=(3ut,5ut);
z3=(9ut,0ut);
z4=(17ut,19ut);
z5=(30ut,15ut);
z6=(0ut,15ut);
z7=(4ut,19ut);
draw z6{dir 50}..z7{right}..{z2-z1}z1..z2{z2-z1}..z3{dir 20}..z4;
%
%
labels(range 1 thru 7);
endchar;
beginchar("w",29 ut#,19 ut#,0);
pickup va_pen;
z1=(8ut,15ut);
z2=(5ut,6ut);
z3=(7ut,0ut);
z4=(16ut,7ut);
z5=(19ut,19ut);
z6=(21ut,0ut);
z7=(29ut,19ut);
z8=(42ut,15ut);
z9=(0ut,15ut);
z10=(4ut,19ut);
draw z9{dir 50}..z10{right}..{z2-z1}z1--z2{z2-z1}..z3{right}..z4--z5
  & z5--z4..z6{right}..z7{dir 100};
%
labels(range 1 thru 8);
endchar;
beginchar("x",19 ut#,19 ut#,0);
pickup va_pen;
z1=(0ut,15ut);
z2=(5ut,19ut);
z3=(9.5ut,13ut);
z4=(7.5ut,6ut);
z5=(0ut,0ut);
z6=(1ut,6ut);
z7=(13ut,10ut);
z8=(16ut,19ut);
z9=(11ut,0ut);
z10=(19ut,6ut);
draw z1{dir 50}..z2{right}..{z4-z3}z3--z4{z4-z3}..z5{left}..{z7-z6}z6;
draw z6--z7{z7-z6}..z8{left}..{z4-z3}z3;
draw z4{z4-z3}..z9{dir 10}..z10{dir 55};
labels(range 1 thru 10);
endchar;
beginchar("y",19 ut#,19 ut#,15ut#);
pickup va_pen;
z1=(0ut,19ut);
z2=(-4.4ut,5ut);
z3=(-2ut,0ut);
z4=(12ut,19ut);
z5=(2ut,-13ut);
z6=(-1ut,-15ut);
z7=(-4ut,-12ut);
z8=(1ut,-5ut);
z9=(19ut,6ut);
draw z1--z2{z2-z1}..z3{dir 20}..z4{z4-z5}
  & z4--z5..z6{left}..z7{up}..z8{z9-z8}..z9{dir 55};
labels(range 1 thru 9);
endchar;
beginchar("z",18 ut#,19 ut#,0ut#);
pickup va_pen;
z1=(0ut,15ut);
z2=(3ut,19ut);
z3=(11ut,17ut);
z4=(16ut,19ut);
z5=(-3ut,0ut);
z6=(0ut,3ut);
z7=(10ut,0ut);
z8=(18ut,6ut);
draw z1{dir 50}..z2{right}..z3..z4{dir 55}
  & z4--z5
  & z5{dir 30}..z6{right}..z7{dir 10}..z8{dir 55};
labels(range 1 thru 8);
endchar;
beginchar("A",35ut#,34ut#,0);
pickup va_pen;
z1=(0ut,2ut);
z2=(12ut,6ut);
z3=(35ut,34ut);
z4=(26ut,0ut);
z5=(12ut,14ut);
z6=(27ut,6ut);
z7=(26ut,28ut);
z8=(22ut,16ut);
z9=(22ut,3ut);
z10=(35ut,6ut);
z11=(3ut,0ut);
draw z1{down}..z11..z2..z7..z3
  & z3--z4;
  draw z6{dir 100}..z8..z5..z9..{dir 55}z10;
labels(range 1 thru 11);
endchar;
beginchar(oct"304",35ut#,44ut#,0);   
pickup va_pen;
z1=(0ut,2ut);
z2=(12ut,6ut);
z3=(35ut,34ut);
z4=(26ut,0ut);
z5=(12ut,14ut);
z6=(27ut,6ut);
z7=(26ut,28ut);
z8=(22ut,16ut);
z9=(22ut,3ut);
z10=(35ut,6ut);
z11=(3ut,0ut);
z12=(32ut,38ut);
z13=(33.5ut,44ut);
z14=(37ut,38ut);
z15=(38.5ut,44ut);
draw z1{down}..z11..z2..z7..z3
  & z3--z4;
  draw z6{dir 100}..z8..z5..z9..{dir 55}z10;
  draw z12..z13;
  draw z14..z15;
labels(range 1 thru 15);
endchar;
beginchar("B",45ut#,34ut#,0);
pickup va_pen;
z1=(8ut,28ut);
z2=(24ut,32ut);
z3=(31ut,28ut);
z4=(31ut,8ut);
z5=(0ut,2ut);
z6=(21ut,34ut);
z7=(23ut,0ut);
z8=(16ut,2ut);
z9=(12ut,6ut);
z10=(6ut,0ut);
z11=(21ut,18ut);
z12=(20ut,30ut);
draw z1..z6{right}..z3{down}..z11{left}
  & z11{right}..z4{down}..z7{left}..z8;
  draw z12--z9..z10..z5;
labels(range 1 thru 12);
endchar;
beginchar("C",19ut#,34ut#,0ut#);
pickup va_pen;
z1=(24ut,26ut);
z2=(20ut,34ut);
z3=(1ut,10ut);
z4=(7ut,0ut);
z5=(19ut,6ut);
z6=(14ut,2ut);
z7=(12ut,20ut);
z8=(0ut,24ut);
z9=(6ut,23ut);
draw z8..z7..z1..z2{left}..z9..z3{z3-z9}..z4{right}..z6..{dir 55}z5;
labels(range 1 thru 9);
endchar;
beginchar("D",39ut#,34ut#,0ut#);
pickup va_pen;
z1=(3ut,30ut);
z2=(21ut,32ut);
z3=(26ut,20ut);
z4=(18ut,0ut);
z5=(15ut,30ut);
z6=(8ut,6ut);
z7=(3ut,0ut);
z8=(0ut,3ut);
z9=(14ut,34ut);
draw z5--z6{z7-z6}..z7..z8..{right}z6{z4-z6}..z4{dir 30}..z3{up}..z2..z9..z1;
labels(range 1 thru 9);
endchar;
beginchar("E",20ut#,34ut#,0ut#);
pickup va_pen;
z1=(23ut,31ut);
z2=(6ut,25ut);
z3=(15ut,18ut);
z4=(0ut,7ut);
z5=(9ut,0ut);
z6=(20ut,6ut);
draw z1..z2{down}..z3
  & z3{left}..z4..z5{right}..{dir 55}z6;   
labels(range 1 thru 6);
endchar;
beginchar("F",32ut#,34ut#,0ut#);
pickup va_pen;
z1=(14ut,34ut);
z2=(12ut,8ut);
z3=(2ut,22ut);
z4=(32ut,34ut);
z5=(9ut,18ut);
z6=(21ut,18ut);
z7=(18ut,31ut);
z8=(6ut,0ut);
z9=(0ut,2ut);
draw z3{up}..z1{right}..z4;
draw z7--z2{z2-z7}..z8..z9;
draw z5--z6;
labels(range 1 thru 9);
endchar;
beginchar("G",25ut#,34ut#,15ut#);
pickup va_pen;
z1=(22ut,26ut);
z2=(19ut,34ut);
z3=(0ut,8ut);
z4=(5ut,0ut);
z5=(20ut,16ut);
z6=(15ut,0ut);
z7=(3ut,-15ut);
z8=(1ut,-11ut);
z9=(8ut,-4ut);
z10=(25ut,6ut);
z11=(0ut,23ut);
draw z11{dir -30}..z1..z2{left}..z3{down}..z4{right}..z5{z5-z6}
  & z5--z6{z6-z5}..z7{left}..z8..z9{z6-z9}..{dir 55}z10;
labels(range 1 thru 11);
endchar;
beginchar("H",30ut#,34ut#,0ut#);
pickup va_pen;
z1=(1ut,29ut);
z2=(5ut,34ut);
z3=(10ut,32ut);
z4=(15ut,34ut);
z5=(10ut,12ut);
z6=(4ut,0ut);
z7=(-1ut,4ut);
z8=(7ut,14ut);
z9=(27ut,24ut);
z10=(31ut,31ut);
z11=(27ut,34ut);
z12=(23ut,31ut);
z13=(17ut,6ut);
z14=(22ut,0ut);
z15=(30ut,6ut);
draw z1..z2{right}..z3..z4;
draw z4--z5{dir 260}..z6..z7..z8{dir 35}--z9{dir 35}..z10..z11..z12--z13;
draw z13{dir 260}..z14{dir 10}..{dir 55}z15;
labels(range 1 thru 15);
endchar;
beginchar("I",29ut#,34ut#,0ut#);
pickup va_pen;
z1=(9ut,34ut);
z2=(25ut,34ut);
z3=(17ut,8ut);
z4=(9ut,0ut);
z5=(1ut,3ut);
z6=(20ut,32ut);
z7=(6ut,31ut);
draw z7{dir 80}..z1{right}..z6..z2{right}
  & z2--z3{z3-z2}..z4{left}..z5;
labels(range 1 thru 7);
endchar;
beginchar("J",22ut#,34ut#,-15ut#);
pickup va_pen;
z1=(9ut,34ut);
z2=(25ut,34ut);
z3=(13ut,0ut);
z4=(4ut,-15ut);
z5=(0ut,-11ut);
z6=(7ut,-4ut);
z7=(22ut,6ut);
z8=(6ut,31ut);
z9=(20ut,32ut);
draw z8..z1{right}..z9..z2{right}
 & z2--z3{z3-z2}...z4{left}..z5{dir 70}..z6{z3-z6}..{dir 55}z7;
labels(range 1 thru 9);
endchar;
beginchar("K",34ut#,34ut#,0ut#);
pickup va_pen;
z1=(5ut,30ut);
z2=(9ut,34ut);
z3=(14ut,32ut);
z4=(19ut,34ut);
z5=(12ut,8ut);
z6=(6ut,0ut);
z7=(0ut,2ut);
z8=(16ut,19ut);
z9=(28ut,30ut);
z10=(35ut,34ut);
z11=(35ut,37ut);
z12=(20ut,6ut);
z13=(26ut,0ut);
z14=(34ut,6ut);
draw z1..z2{right}..z3..z4;
draw z4--z5{z5-z4}..z6..z7;
draw z8{dir 20}..z9{z9-z8}..z10{dir 20};
draw z8--z12{z12-z8}..z13{dir 10}..{dir 55}z14;
labels(range 1 thru 14);
endchar;
beginchar("L",27ut#,34ut#,0ut#);
pickup va_pen;
z1=(6ut,23ut);
z2=(27ut,26ut);
z3=(27ut,6ut);
z4=(20ut,0ut);
z5=(17ut,29ut);
z6=(9ut,7ut);
z7=(4ut,0ut);
z8=(0ut,3ut);
z9=(22ut,34ut);
draw z1{dir 330}..z2..z9..{z6-z5}z5--z6{z6-z5}..z7..z8{up}
     ..z6{right}..z4{dir 10}..{dir 55}z3;
labels(range 1 thru 9);
endchar;
beginchar("M",53ut#,34ut#,0ut#);
pickup va_pen;
z1=(0ut,1ut);
z2=(10ut,6ut);
z3=(31ut,34ut);
z4=(24ut,0ut);
z5=(48ut,34ut);
z6=(40ut,4ut);
z7=(45ut,0ut);
z8=(53ut,6ut);
z9=(49ut,2ut);
z10=(4ut,0ut);
z11=(24ut,26ut);
draw z1..z10..{z11-z2}z2..z11{z11-z2}..z3
 & z3--z4
 & z4--z5
 &z5--z6{z6-z5}..z7{dir 10}..{dir 55}z8;
labels(range 1 thru 11 );
endchar;
beginchar("N",45ut#,34ut#,0ut#);
pickup va_pen;
z1=(0ut,1ut);
z2=(9ut,5ut);
z3=(28ut,34ut);
z4=(23ut,0ut);
z5=(43ut,31ut);
z6=(48ut,34ut);
z7=(4ut,0ut);
z8=(23ut,28ut);
draw z1..z7..{z8-z2}z2..z8{z3-z2}..z3
 & z3--z4
 & z4--z5{z5-z4}..z6{dir 20};
labels(range 1 thru 9);
endchar;
beginchar("O",34ut#,34ut#,0ut#);
pickup va_pen;
z1=(24ut,23ut);
z2=(9ut,0ut);
z3=(1ut,16ut);
z4=(18ut,34ut);
z5=(23ut,34ut);
z6=(28ut,29ut);
z7=(33ut,34ut);
draw z1{down}..z2{left}..z3{dir 80}..z4{right}..cycle;
draw z5.{down}..z6..{dir 50}z7;
labels(range 1 thru 7);
endchar;
beginchar(oct"326",34ut#,46ut#,0ut#);   
pickup va_pen;
z1=(24ut,23ut);
z2=(9ut,0ut);
z3=(1ut,16ut);
z4=(18ut,34ut);
z5=(23ut,34ut);
z6=(28ut,29ut);
z7=(33ut,34ut);
z8=(16.5ut,44ut);
z9=(21.5ut,44ut);
z10=(15ut,38ut);
z11=(20ut,38ut);
draw z1{down}..z2{left}..z3{dir 80}..z4{right}..cycle;
draw z5.{down}..z6..{dir 50}z7;
draw z8..z10;
draw z9..z11;
labels(range 1 thru 11);
endchar;
beginchar("P",32ut#,34ut#,0ut#);
pickup va_pen;
z1=(0ut,2ut);
z2=(6ut,0ut);
z3=(12ut,8ut);
z4=(18ut,30ut);
z5=(4ut,24ut);
z6=(19ut,34ut);
z7=(32ut,26ut);
z8=(18ut,19ut);
draw z1..z2{right}..z3{z4-z3}...z4;
draw z5{dir 60}..z6{right}..{down}z7{down}..{dir 180}z8;
labels(range 1 thru 8);
endchar;
beginchar("Q",37ut#,34ut#,0ut#);
pickup va_pen;
z1=(24ut,23ut);
z2=(9ut,0ut);
z3=(1ut,16ut);
z4=(18ut,34ut);
z5=(10ut,4ut);
z6=(17ut,6ut);
z7=(23ut,0ut);
z8=(37ut,19ut);
z9=(23ut,34ut);
z10=(26ut,29ut);
z11=(32ut,34ut);
draw z1{down}..z2{left}..z3{dir 80}..z4{right}..cycle;
draw z5..z6{right}..z7{dir 40}..{dir 70}z8;
draw z9.{down}..z10..{dir 50}z11;
labels(range 1 thru 11);
endchar;
beginchar("R",36ut#,34ut#,0ut#);
pickup va_pen;
z1=(0ut,2ut);
z2=(6ut,0ut);
z3=(12ut,8ut);
z4=(18ut,30ut);
z5=(4ut,25ut);
z6=(17ut,34ut);
z7=(32ut,26ut);
z8=(18.5ut,19ut);
z9=(21.5ut,18ut);
z10=(26ut,4ut);
z11=(30ut,0ut);
z12=(36ut,6ut);
draw z1..z2{right}..z3{z4-z3}...z4;
draw z5..z6{right}..{dir 270}z7{dir 270}..{dir 185}z8;
draw z9{z12-z9}--z10{z10-z9}..z11{dir 10}..{dir 55}z12;
labels(range 1 thru 12);
endchar;
beginchar("S",27ut#,34ut#,0ut#);
pickup va_pen;
z1=(7ut,24ut);
z2=(22ut,22ut);
z3=(27ut,34ut);
z4=(17ut,27ut);
z5=(15ut,19ut);
z6=(12ut,8ut);
z7=(6ut,0ut);
z8=(0ut,2ut);
draw z1{dir -40}..z2..z3{left}..z4{z6-z4}..z5{z6-z4}..z6{z6-z4}..z7{left}..z8;
labels(range 1 thru 8);
endchar;
beginchar("T",31ut#,34ut#,0ut#);
pickup va_pen;
z1=(15ut,34ut);
z2=(12ut,8ut);
z3=(2ut,24ut);
z4=(31ut,34ut);
z5=(17.5ut,30ut);
z6=(6ut,0ut);
z7=(0ut,2ut);
draw z3{up}..z1{right}..z4;
draw z5--z2{z2-z5}..z6..z7;
labels(range 1 thru 7);
endchar;
beginchar("U",33ut#,34ut#,0ut#);
pickup va_pen;
z1=(7ut,34ut);
z2=(4ut,8ut);
z3=(9ut,0ut);
z4=(24.5ut,15ut);
z5=(29ut,34ut);
z6=(22ut,5ut);
z7=(25ut,0ut);
z8=(0ut,30ut);
z9=(33ut,6ut);
z10=(10ut,31ut);
draw z8{dir 70}..z1{right}..z10{z2-z10}--z2{z2-z10}..z3{dir 15}..z4;
draw z5--z6{z6-z5}..z7{dir 10}..{dir 55}z9;
labels(range 1 thru 10);
endchar;
beginchar(oct"334",33ut#,44ut#,0ut#);   
pickup va_pen;
z1=(7ut,34ut);
z2=(4ut,8ut);
z3=(9ut,0ut);
z4=(24.5ut,15ut);
z5=(29ut,34ut);
z6=(22ut,5ut);
z7=(25ut,0ut);
z8=(0ut,30ut);
z9=(33ut,6ut);
z10=(10ut,31ut);
z11=(19.5ut,44ut);
z12=(18ut,38ut);
z13=(24.5ut,44ut);
z14=(23ut,38ut);
draw z8{dir 70}..z1{right}..z10{z2-z10}--z2{z2-z10}..z3{dir 15}..z4;
draw z5--z6{z6-z5}..z7{dir 10}..{dir 55}z9;
draw z11..z12;
draw z13..z14;
labels(range 1 thru 14);
endchar;
beginchar("V",32ut#,34ut#,0ut#);
pickup va_pen;
z1=(0ut,30ut);
z2=(7ut,34ut);
z3=(10ut,31ut);
z4=(3ut,8ut);
z5=(9ut,0ut);
z6=(20ut,10ut);
z7=(24ut,34ut);
z8=(27ut,29ut);
z9=(32ut,34ut);
draw z1{dir 70}..z2{right}..z3{z4-z3}..z4{z4-z3}..z5{dir 10}..z6{z8-z6}..z7
 & z7{down}..z8..{dir 50}z9;
labels(range 1 thru 9);
endchar;
beginchar("W",55ut#,34ut#,0ut#);
pickup va_pen;
z1=(1ut,30ut);
z2=(8ut,34ut);
z3=(11ut,31ut);
z4=(5ut,8ut);
z5=(10ut,0ut);
z6=(23ut,15ut);
z7=(28ut,34ut);
z8=(23ut,15ut);
z9=(31ut,0ut);
z10=(41ut,12ut);
z11=(41ut,34ut);
z12=(45ut,29ut);
z13=(50ut,34ut);
draw z1{dir 70}..z2{right}..z3{z4-z3}..z4{z4-z3}..z5{right}..z6{z7-z6}..z7;
draw z8{z8-z7}..z9{right}..z10{z12-z10}..z11
 & z11{down}..z12..{dir 50}z13;
labels(range 1 thru 13);
endchar;
beginchar("X",26ut#,34ut#,0ut#);
pickup va_pen;
z1=(6ut,30ut);
z2=(13ut,34ut);
z3=(18ut,22ut);
z4=(15ut,12ut);
z5=(7ut,0ut);
z6=(3ut,6ut);
z7=(16ut,16ut);
z8=(30ut,28ut);
z9=(26ut,34ut);
z10=(20ut,0ut);
z11=(26ut,6ut);
draw z1{dir 60}..z2{right}..z3{z4-z3}..z4{z4-z3}..z5{left}..z6{up}..{dir30}z7;
draw z7{dir 30}..z8{up}..z9{left}..z3{z4-z3}..z4{z4-z3}
     ..z10{dir 10}..{dir 55}z11;
labels(range 1 thru 11);
endchar;
beginchar("Y",29ut#,34ut#,15ut#);
pickup va_pen;
z1=(9ut,31ut);
z2=(1ut,8ut);
z3=(5ut,0ut);
z4=(26ut,34ut);
z5=(16ut,-5ut);
z6=(9ut,-15ut);
z7=(6ut,-12ut);
z8=(9ut,-6ut);
z9=(29ut,6ut);
z10=(0ut,30ut);
z11=(7ut,34ut);
draw z10{up}..z11{right}..z1--z2{z2-z1}..z3{right}...z4{z4-z5}
 & z4--z5{z5-z4}..z6{left}..z7..z8{z9-z8}..{dir 55}z9;
labels(range 1 thru 11);
endchar;
beginchar("Z",29ut#,34ut#,0ut#);
pickup va_pen;
z1=(3ut,29ut);
z2=(11ut,34ut);
z3=(29ut,6ut);
z4=(21ut,0ut);
z5=(23ut,31ut);
z6=(9ut,6ut);
z7=(2ut,0ut);
z8=(0ut,3ut);
z9=(29ut,34ut);
z10=(12ut,18ut);
z11=(26ut,18ut);
draw z1{dir 60}..z2{right}..z5{z9-z5}..z9
  &z9{z6-z9}--z6{z6-z9}..z7..z8..z6{z4-z6}..z4{dir 10}..{dir 55}z3;
draw z10--z11;
labels(range 1 thru 11);
endchar;
beginchar(oct"377",16ut#,34ut#,15ut#);
pickup va_pen;
z1=(-1ut*.87,19ut*.87);
z2=(3ut*.87,29ut*.87);
z3=(-7ut*.87,-18ut*.87);
z4=(13ut*.87,39ut*.87);
z5=(19ut*.87,32ut*.87);
z6=(7ut*.87,21ut*.87);
z7=(18ut*.87,8ut*.87);
z8=(10ut*.87,0ut*.87);
z9=(4ut*.87,2ut*.87);
draw z3--z2{z2-z3}..z4{right}..z5{down}..z6{left}
 & z6{right}..z7{down}..z8{left}..z9;
labels(range 1 thru 9);
endchar;
font_quad 33pt#;
font_normal_space 7.6pt#;    
font_normal_stretch 5pt#;    
font_normal_shrink 3pt#;     
font_x_height 5.7pt#;
%
%
vor_m#:=-3ut#;
vor_i#:=-1ut#;
B_e#:=-8ut#;
T_e#:=-2ut#;
boundarychar:=oct"040";
ligtable oct"025": "-"   =:  oct"026";   
ligtable  "a":oct"344":"c":"d":"e":oct"037":"f":"g":"h":"i":"j":
          "k":"l":"m":"n":"p":"t":"u":oct"374":"x":"y":"z":
              "a" |=:| 133,
          oct"344"|=:| 133,
              "b" |=:| 134,
              "c" |=:| 133,
              "d" |=:| 133,
              "f" |=:| 134,
              "g" |=:| 133,
              "h" |=:| 134,
              "i" |=:| 146,
              "j" |=:| 146,
              "k" |=:| 134,
              "l" |=:| 134,
              "m" |=:| 132,
              "n" |=:| 132,
              "o" |=:| 133,
         oct"366" |=:| 133,
              "p" |=:| 146,
              "q" |=:| 133,
              "r" |=:| 132,
              "s" |=:| 146,
         oct"377" |=:| 134,   
              "t" |=:| 134,
              "u" |=:| 146,
         oct"374" |=:| 146,
              "v" |=:| 132,
              "w" |=:| 132,
              "x" |=:| 132,
              "y" |=:| 146,
              "z" |=:| 132;
ligtable "b":"o":oct"366":"r":"v":"w":             
              "e" |=: oct"037",                    
              " " |=:| oct"043",
              "-" |=:| oct"043",
              ")" |=:| oct"043",
              "." |=:| oct"043",
              ":" |=:| oct"043",
         oct"001" |=:| oct"043",    
         oct"047" |=:| oct"043",    
         oct"140" |=:| oct"043",    
         oct"020" |=:| oct"043",    
         oct"021" |=:| oct"043",    
         oct"042" |=:| oct"043",    
              "," |=:| oct"043",
              ";" |=:| oct"043",
              "!" |=:| oct"043",
              "?" |=:| oct"043",
              "a" |=:| oct"044",
          oct"344"|=:| oct"044",
              "b" |=:| oct"045",
              "c" |=:| oct"044",
              "d" |=:| oct"044",
              "f" |=:| oct"045",
              "g" |=:| oct"044",
              "h" |=:| oct"045",
              "i" |=:| oct"043",
              "j" |=:| oct"043",
              "k" |=:| oct"045",
              "l" |=:| oct"045",
              "m" |=:| oct"046",
              "n" |=:| oct"046",
              "o" |=:| oct"044",
         oct"366" |=:| oct"044",
              "p" |=:| oct"043",
              "q" |=:| oct"044",
              "r" |=:| oct"046",
              "s" |=:| oct"043",
         oct"377" |=:| oct"043",
              "t" |=:| oct"045",
              "u" |=:| oct"043",
         oct"374" |=:| oct"043",
              "v" |=:| oct"046",
              "w" |=:| oct"046",
              "x" |=:| oct"046",
              "y" |=:| oct"043",
              "z" |=:| oct"046";
ligtable ||:"(":oct"020":oct"022":"-":     
              "-"  =: oct"025",
              "a" |=:| 130,
         oct"344" |=:| 130,
              "b" |=:| 127,
              "c" |=:| 130,
              "d" |=:| 130,
              "e" |=:| 128,
              "f" |=:| 127,
              "g" |=:| 130,
              "h" |=:| 127,
              "i" |=:| 129,
              "j" |=:| 129,
              "k" |=:| 127,
              "l" |=:| 127,
              "o" |=:| 130,
         oct"366" |=:| 130,
              "p" |=:| 129,
              "q" |=:| 130,
              "s" |=:| 129,
              "t" |=:| 127,
              "u" |=:| 129,
         oct"374" |=:| 129,
              "y" |=:| 129,
         oct"377" |=:| 127;
ligtable "s":oct"377":
              "a" |=:| 149,
          oct"344"|=:| 149,
              "b" |=:| 150,
              "c" |=:| 149,
              "d" |=:| 149,
              "e" |=:| 148,
              "f" |=:| 150,
              "g" |=:| 149,
              "h" |=:| 150,
              "i" |=:| 151,
              "j" |=:| 151,
              "k" |=:| 150,
              "l" |=:| 150,
              "m" |=:| 152,
              "n" |=:| 152,
              "o" |=:| 149,
         oct"366" |=:| 149,
              "p" |=:| 151,
              "q" |=:| 149,
              "r" |=:| 152,
              "s" |=:| 151,
              "t" |=:| 150,
              "u" |=:| 151,
         oct"374" |=:| 151,
              "v" |=:| 152,
              "w" |=:| 152,
              "x" |=:| 152,
              "y" |=:| 151,
              "z" |=:| 152;
ligtable  "A":oct"304":"C":"E":"G":"H":"J":"K":"L":"M":
          "R":"U":"X":"Y":"Z":oct"334":
              "a" |=:| 131,
         oct"344" |=:| 131,
              "b" |=:| 143,
              "c" |=:| 131,
              "d" |=:| 131,
              "f" |=:| 143,
              "g" |=:| 131,
              "h" |=:| 143,
              "i" |=:| 144,
              "j" |=:| 144,
              "k" |=:| 143,
              "l" |=:| 143,
              "m" |=:| 145,
              "n" |=:| 145,
              "o" |=:| 131,
         oct"366" |=:| 131,
              "p" |=:| 144,
              "q" |=:| 131,
              "s" |=:| 144,
              "r" |=:| 145,
              "t" |=:| 143,
              "u" |=:| 144,
         oct"374" |=:| 144,
              "v" |=:| 145,
              "w" |=:| 145,
              "x" |=:| 145,
              "y" |=:| 144,
              "z" |=:| 145;
ligtable  "B":"D":"O":oct"326":"V":"W":"N":
              "a" |=:| 139,
         oct"344" |=:| 139,
              "b" |=:| 140,
              "c" |=:| 139,
              "d" |=:| 139,
              "e" |=:| 153,
              "f" |=:| 140,
              "g" |=:| 139,
              "h" |=:| 140,
              "i" |=:| 141,
              "j" |=:| 141,
              "k" |=:| 140,
              "l" |=:| 140,
              "m" |=:| 142,
              "n" |=:| 142,
              "o" |=:| 139,
         oct"366" |=:| 139,
              "p" |=:| 142,
              "q" |=:| 139,
              "r" |=:| 142,
              "s" |=:| 141,
              "t" |=:| 140,
              "u" |=:| 141,
         oct"374" |=:| 141,
              "v" |=:| 142,
              "w" |=:| 142,
              "x" |=:| 142,
              "y" |=:| 141,
              "z" |=:| 142,
             153 kern B_e#,
             141 kern vor_i#,
             142 kern vor_m#;
ligtable  "F":"I":"P":"S":"T":
              "a" |=:| 135,
         oct"344" |=:| 135,
              "b" |=:| 138,
              "c" |=:| 135,
              "d" |=:| 135,
              "e" |=:| 147,
              "f" |=:| 138,
              "g" |=:| 135,
              "h" |=:| 138,
              "i" |=:| 136,
              "j" |=:| 136,
              "k" |=:| 138,
              "l" |=:| 138,
              "m" |=:| 137,
              "n" |=:| 137,
              "o" |=:| 135,
         oct"366" |=:| 135,
              "p" |=:| 136,
              "q" |=:| 135,
              "r" |=:| 137,
              "s" |=:| 136,
              "t" |=:| 138,
              "u" |=:| 136,
         oct"374" |=:| 136,
              "v" |=:| 137,
              "w" |=:| 137,
              "x" |=:| 137,
              "y" |=:| 136,
              "z" |=:| 137,
              147 kern T_e#,
              137 kern vor_m#;
bye.
}%
\newtheorem{theorem}{Theorem}[section]
\newtheorem{lemma}{Lemma}
\theoremstyle{definition}
\newtheorem{definition}{Definition}
\def\pr{\textrm{\upshape{pr}}}
\def\precond#1{{\vphantom{#1}}^\bullet #1}
\def\postcond#1{{#1}^\bullet}
\def\Production#1{\stackrel{#1}{\Longrightarrow}}
\def\production#1{\stackrel{#1}{\longrightarrow}}
\def\equivalent{\Leftrightarrow}
\newfont{\fsc}{eusm10 scaled 1100}      
\def\powerset#1{\mbox{\fsc P}(#1)}
\def\powermultiset#1{\mathbb{N}^{#1}}
\def\trail#1{\text{~#1}}
\def\into{\rightarrow}
\def\defitem#1{\emph{#1}}
\newcommand{\plat}[1]{\raisebox{0pt}[0pt][0pt]{#1}}   
\newcommand{\inp}{\mathbin\in}                        
\newtheorem{observation}{Observation}{\bfseries}{\rmfamily}
\def\definitionname{Definition}
\newcommand{\refdf}[1]{\definitionname~\ref{df-#1}}
\def\theoremname{Theorem}
\newcommand{\refthm}[1]{\theoremname~\ref{thm-#1}}
\def\lemmaname{Lemma}
\newcommand{\reflem}[1]{\lemmaname~\ref{lem-#1}}
\def\figurename{Figure}
\newcommand{\reffig}[1]{\figurename~\ref{fig-#1}}
\def\isomorph{\approxeq}
\let\@lemma\lemma
\def\lemma#1#2{\@lemma\label{lem-#1}\def\predecl{#2}\ifx\predecl\myempty\mbox{}\else#2\fi\\}
\let\@definition\definition
\let\@enddefinition\enddefinition
\def\myempty{}
\def\definition#1#2#3{\@definition\label{df-#1}%
  \list{}{\listparindent 0pt\leftmargin 13pt}\item[]%
  \def\predecl{#2}\ifx\predecl\myempty\mbox{}\else\hglue -13pt plus 4pt minus 4pt#2\fi%
  \ifx\begin#3\else\\\fi%
  \def\setleftmargin##1##2{}%
  \def\tmp{#3}\tmp%
}
\def\enddefinition{\endlist\@enddefinition}
\let\@proposition\proposition
\def\proposition#1#2{\@proposition\label{pr-#1}#2\mbox{}\\}
\let\@theorem\theorem
\def\theorem#1#2{\@theorem\label{thm-#1}#2\mbox{}\\}
\let\@conjecture\conjecture
\def\conjecture#1#2{\@conjecture\label{cj-#1}#2\mbox{}\\}
\let\@observation\observation
\def\observation#1#2{\@observation\label{cj-#1}#2\mbox{}\\}
\DeclareFontFamily{T1}{la}{}
\DeclareFontShape{T1}{la}{m}{n}{<->s*[0.8571428571]la14}{}
\DeclareRobustCommand\la{\fontfamily{la}\fontencoding{T1}\selectfont}
\def\processfont#1{\text{\rm\la #1}}
\def\NN{\processfont{N}}
\def\SS{\processfont{S}\,}
\def\TT{\processfont{T}}
\def\FF{\processfont{F}}
\def\MM{\processfont{M}}
\def\OO{\processfont{O}}
\def\PP{\processfont{P}}
\def\ELL{\processfont{l}}
\def\goesto{\@transition\rightarrowfill}
\def\Goesto{\@transition\Rightarrowfill}
\def\ngoesto{\@transition\nrightarrowfill}
\def\nGoesto{\@transition\nRightarrowfill}
\def\@transition#1{\@ifnextchar[{\@@transition{#1}}{\@@transition{#1}[]}}
\newbox\@transbox
\newbox\@arrowbox
\def\rightarrowfill{$\m@th\mathord-\mkern-6mu%
  \cleaders\hbox{$\mkern-2mu\mathord-\mkern-2mu$}\hfill
  \mkern-6mu\mathord\rightarrow$}
\def\Rightarrowfill{$\m@th\mathord=\mkern-6mu%
  \cleaders\hbox{$\mkern-2mu\mathord=\mkern-2mu$}\hfill
  \mkern-6mu\mathord\Rightarrow$}
\def\@@transition#1[#2]%
\wd\@transbox{#1}
\@transbox\hbox{$\mathop{\box\@arrowbox}\limits^{\box\@transbox}$}
\def\ie{i.e.\ }
\def\Act{\textrm{\upshape Act}}
\def\Loc{\textrm{\upshape Loc}}
\def\MVP{\textrm{\upshape MVP}}
\def\concurrent{\smile}
\newcounter{netimage}
\def\p#1:#2;{\cnode #1{0.3}{n\thenetimage-#2}}
\def\P#1:#2;{\p #1:#2;\pscircle*#1{0.1}}
\def\q#1:#2:#3;{\p #1:#2;\rput#1{\rput[l](0.45,0){\large\it #3}}}
\def\Q#1:#2:#3;{\P #1:#2;\rput#1{\rput[l](0.45,0){\large\it #3}}}
\def\qq#1:#2:#3;{\p #1:#2;\rput#1{\rput[t](0,-0.5){\large\it #3}}}
\def\ql#1:#2:#3;{\p #1:#2;\rput#1{\rput[r](-0.45,0){\large\it #3}}}
\def\qt#1:#2:#3;{\p #1:#2;\rput#1{\rput[b](0,0.45){\large\it #3}}}
\def\Ql#1:#2:#3;{\P #1:#2;\rput#1{\rput[r](-0.45,0){\large\it #3}}}
\def\qx#1:#2:#3:#4;{\p #1:#2;\rput#1{\rput#4{\large\it #3}}}
\def\QXX#1:#2:#3:#4:#5;{\p #1:#2;\rput#1{\rput#4{\large\it #3}}\pscircle*#5{0.1}}
\def\s#1:#2:#3;{\p #1:#2;\rput#1{\rput(-0.03,0){\large\it #3}}}
\def\t#1:#2:#3;{\rput#1{\rnode{n\thenetimage-#2}{\psframebox{%
  \vbox to 0.6cm{\vfil\hbox to 0.6cm{\hfil\Large\it #3\hfil}\vfil}}}}}
\def\u#1:#2:#3:#4;{\rput#1{\rnode{n\thenetimage-#2}{\psframebox{%
  \vbox to 0.6cm{\vfil\hbox to 0.6cm{\hfil\Large\it #3\hfil}\vfil}}}}%
  \rput#1{\rput[l](0.6,0){\large\it #4}}}
\def\ut#1:#2:#3:#4;{\rput#1{\rnode{n\thenetimage-#2}{\psframebox{%
  \vbox to 0.6cm{\vfil\hbox to 0.6cm{\hfil\Large\it #3\hfil}\vfil}}}}%
  \rput#1{\rput[b](0,0.6){\large\it #4}}}
\def\ul#1:#2:#3:#4;{\rput#1{\rnode{n\thenetimage-#2}{\psframebox{%
  \vbox to 0.6cm{\vfil\hbox to 0.6cm{\hfil\Large\it #3\hfil}\vfil}}}}%
  \rput#1{\rput[r](-0.6,0){\large\it #4}}}
\def\a#1->#2;{\ncline{->}{n\thenetimage-#1}{n\thenetimage-#2}}
\def\A#1->#2;{\ncarc{->}{n\thenetimage-#1}{n\thenetimage-#2}}
\def\avlinearc{0.2}
\def\av#1[#2]-#3->[#4]#5;{
  \SpecialCoor
  \psline[linearc=\avlinearc]{->}([angle=#2]n\thenetimage-#1)#3([angle=#4]n\thenetimage-#5)
}
\long\def\petrinet(#1)#2\end{\psscalebox{0.7}{\pspicture(#1)\stepcounter{netimage}#2\endpspicture}\end}
\title{Synchrony vs. Causality\\ in Asynchronous Petri Nets%
\thanks{This work was supported by the DFG (German Research
Foundation), grants {GO-671/6-1} and {NE-1505/2-1}.}}
\author{Jens-Wolfhard Schicke
\institute{Institute for Programming and Reactive Systems, TU Braunschweig, Germany}
\email{drahflow@gmx.de}
\and
Kirstin Peters
\institute{School of EECS, TU Berlin, Germany}
\email{kirstin.peters@tu-berlin.de}
\and
Ursula Goltz
\institute{Institute for Programming and Reactive Systems, TU Braunschweig, Germany}
\email{goltz@ips.cs.tu-bs.de}
}
\begin{document}

\maketitle

\begin{abstract}
Given a synchronous system, we study the question whether the behaviour of that
system can be exhibited by a (non-trivially) distributed and hence asynchronous
implementation.
In this paper we show, by counterexample, that synchronous systems cannot in
general be implemented in an asynchronous fashion without either introducing an
infinite implementation or changing the causal structure of the system behaviour.
\\{\bf keywords:} asynchrony, distributed systems, causal semantics, Petri nets
\end{abstract}

\section{Introduction}

It would be desirable -- from a programming standpoint -- to design systems in
a synchronous fashion, yet reap the benefits of parallelism by means of an
(ideally automatically generated) asynchronous implementation executed on
multiple processing units in parallel. We consider the question under which
circumstances such an approach is applicable, or equivalently, what
restrictions must be placed on the synchronous design in order that it may be
simulated asynchronously.

We formalise this problem by means of Petri nets (Section \ref{sec-basic}), a
semi-structural requirement (Section \ref{sec-distr}) on Petri nets to enforce
asynchrony in the implementation, and an equivalence relation (Section
\ref{sec-cpt}) on possible Petri net behaviours to decide whether a candidate
implementation is indeed faithful to the synchronous specification.

\begin{figure}[t]
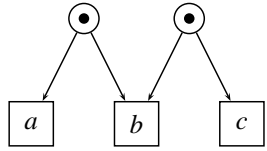

  \begin{center}
    \begin{petrinet}(6,4)
      \P(2,3):p;
      \P(4,3):q;
      \t(1,1):a:a;
      \t(3,1):b:b;
      \t(5,1):c:c;

      \a p->a;
      \a p->b;
      \a q->b;
      \a q->c;
    \end{petrinet}
  \end{center}

  \caption{A fully reached, pure \textbf{M}, the problematic structure from \cite{glabbeek08syncasyncinteraction}}
  \label{fig-pure-M}
\end{figure}

Countless equivalence relations for system behaviour have already been proposed.
When comparing the strictness of these equivalences, as done in
\cite{vanglabbeek93linear} or \cite{vanglabbeek01refinement},
and exploring the resulting lattice,
one finds multiple ``dimensions'' of features along which such an equivalence
may be more or less discriminating.
The most prominent one is the linear-time branching-time axis, denoting how well
the decision structure of a system is captured by the equivalence.
Another dimension relevant to this paper is that along which the detail of
the causal structure increases.
On the first of these two dimensions, we would at the very least like to detect
deadlocks introduced by the implementation, on the second one, at least a
reduction in concurrency due to the implementation.
As every (non-trivial) implementation will introduce internal $\tau$-transitions,
a suitable equivalence must abstract from them, as long as they do not
allow a divergence.

\cite{glabbeek08syncasyncinteraction} answers part of the question of
distributed implementability for a certain equivalence of this spectrum, namely step
readiness equivalence.
Step readiness equivalence is one of the weakest equivalences that
respects branching time, concurrency and divergence
to some degree but
abstracts from internal actions. For this equivalence we derived an
exact characterisation of asynchronously implementable (``distributable'')
Petri nets.
The main difficulty in implementing arbitrary Petri nets up to step readiness
equivalence is a structure called pure \textbf{M}, depicted in \reffig{pure-M},
where two parallel transitions are in pairwise conflict with a common third.
By \cite{glabbeek08syncasyncinteraction} a synchronous net is distributable
only if it contains no fully reachable pure \textbf{M}. The other direction needed
for exactness has not been published yet, as the only as of yet existing proofs
utilises an infinite implementation.

Using the strictly weaker completed step trace equivalence,
\cite{schicke09synchrony} proved any synchronous net to be distributable.
Comparing these two results and the given implementation in the latter we made
a very interesting observation: We were unable to find an implementation of a
synchronous net with a fully reachable pure \textbf{M} which did not introduce
additional causal dependencies.

In this paper we show that this drawback holds for any sensible encoding of
synchronous interactions, i.e., it is a general phenomenon of encoding
synchrony. We reach that result by extending the pure \textbf{M} of \reffig{pure-M}
into a repeated pure \textbf{M}, depicted in \reffig{counterexample}.
We thereby get a separation result similar to
\cite{glabbeek08syncasyncinteraction} along a different, namely the causal,
dimension of the spectrum of behavioural equivalences.

We introduce basic Petri net concepts in Section \ref{sec-basic}, then turn to
recounting the definition of distributability in Section \ref{sec-distr}. Afterwards we
introduce completed pomset trace equivalence in Section \ref{sec-cpt}, justify it by
means of illustrative examples, and use it in Section \ref{sec-impossible} to prove the
impossibility of implementing general Petri nets while respecting causality.
Finally Section \ref{sec-conclusion} concludes.

\begin{figure}[t]
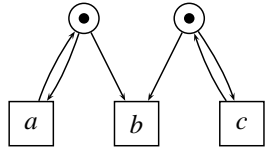

  \begin{center}
    \begin{petrinet}(6,4)
      \P(2,3):p;
      \P(4,3):q;
      \t(1,1):a:a;
      \t(3,1):b:b;
      \t(5,1):c:c;

      \A p->a; \A a->p;
      \a p->b;
      \a q->b;
      \A q->c; \A c->q;
    \end{petrinet}
  \end{center}
  \caption{A repeated pure \textbf{M}. A finite, 1-safe, undistributable net used as a running counterexample.}
  \label{fig-counterexample}
\end{figure}

\clearpage
\section{Basic Notions}\label{sec-basic}

Most material in this section has been taken verbatim or with minimal adaptation
from \cite{glabbeek08syncasyncinteraction} or \cite{schicke09synchrony}.

\noindent
Where dealing with tuples, we use $\pr_1, \pr_2, \ldots$ as the projection
functions returning the first, second, \ldots\linebreak[4] element respectively. We extend these
functions to sets element-wise.

\begin{definition}{net}{
  Let \Act{} be a set of \emph{visible actions} and
  $\tau\mathbin{\not\in}\Act$ be an \emph{invisible action}.
  }
  A \defitem{labelled net} (over \Act) is a tuple
  $N = (S, T, F, M_0, \ell)$ where
  \begin{itemize}
    \item $S$ is a set (of \defitem{places}),
    \item $T$ is a set (of \defitem{transitions}),
    \item $F \subseteq
       S \times T \cup T \times S$
      (the \defitem{flow relation}),
    \item $M_0 \subseteq S$ (the \defitem{initial marking}) and
    \item \plat{$\ell: T \into \Act \cup \{\tau\}$} (the \defitem{labelling function}).
  \end{itemize}

  A net is called \defitem{finite} iff $S$ and $T$ are finite.
\end{definition}

\noindent
Petri nets are depicted by drawing the places as circles, the
transitions as boxes containing the respective label, and the flow
relation as arrows (\defitem{arcs}) between them.
When a Petri net represents a concurrent system, a global state of such a system is
given as a \defitem{marking}, a set of places, the initial state being
$M_0$.  A marking is depicted by placing a dot (\defitem{token}) in
each of its places.  The dynamic behaviour of the represented system
is defined by describing the possible moves between markings. A
marking $M$ may evolve into a marking $M'$ when a nonempty set of transitions
$G$ \defitem{fires}. In that case, for each arc $(s,t) \in F$ leading
to a transition $t$ in $G$, a token moves along that arc from $s$ to
$t$.  Naturally, this can happen only if all these tokens are
available in $M$ in the first place. These tokens are consumed by the
firing, but also new tokens are created, namely one for every outgoing
arc of a transition in $G$. These end up in the places at the end of
those arcs.  A problem occurs when as a result of firing $G$ multiple
tokens end up in the same place. In that case $M'$ would not be
a marking as defined above. In this paper we restrict
attention to nets in which this never happens. Such nets are called
\defitem{1-safe}.  Unfortunately, in order to formally define this
class of nets, we first need to correctly define the firing rule
without assuming 1-safety. Below we do this by forbidding the firing
of sets of transitions when this might put multiple tokens in the same
place.

To help track causality throughout the evolution of a net, we extend the usual
notion of marking to \defitem{dependency marking}. Within these dependency
markings, every token is augmented with the labels of all transitions having
causally contributed to its existence.
The other basic Petri net notions
presented here have been extended in the same manner.
While it might seem more natural
to annotate the causal history of the tokens by a partial order, we only
use a set here in order to keep the number of reachable markings finite
for finite nets (a property a later proof will utilise).

We denote the preset and postset of a net element $x\in S\cup T$ by
$\precond{x} := \{y \mid (y, x) \in F\}$
and
$\postcond{x} := \{y \mid (x, y) \in F\}$
respectively.
These functions are extended to sets in the usual manner, \ie
$\precond{X} := \{y \mid y \inp \precond{x},~ x \inp X\}$.

\begin{definition}{steps}{
  Let $N = (S, T, F, M_0, \ell)$ be a net.
  Let $M_1, M_2 \subseteq S\times\powerset{\Act}$.
  }
  $G \subseteq T,
  G \not= \varnothing$, is called a \defitem{dependency step from $M_1$ to $M_2$},
  $M_1 [G\rangle_N M_2$, iff
  \begin{itemize}
    \item all transitions contained in $G$ are enabled, i.e.
      \begin{equation*}
        \forall t\in G. \precond{t} \subseteq \pr_1(M_1) \wedge 
          (\pr_1(M_1) \setminus \precond{t}) \cap \postcond{t} =
          \varnothing \trail{,}
      \end{equation*}
    \item all transitions of $G$ are independent, that is not conflicting:
      \begin{align*}
        &\forall t,u \in G, t\not= u. \precond{t} \cap \precond{u} = \varnothing
        \wedge \postcond{t} \cap \postcond{u} = \varnothing \trail{,}
      \end{align*}
    \item causalities are extended by the labels of the firing transitions:
      \begin{align*}
        M_2 ={}& \left\{p \in M_1 \mid \pr_1(p) \not\in \precond{G}\right\} \cup \\
        &\left\{\left(s, (\{\ell(t)\} \setminus \{\tau\}) \cup \bigcup_{\hspace{-4pt}p \in M_1 \wedge \pr_1(p) \in \precond{t}\hspace{-2.5em}}
            \pr_2(p)\right) \middle|\,
            t \in G, s \in \postcond{t}\right\}
        \trail{.}
      \end{align*}
  \end{itemize}
\end{definition}

\noindent
Applying $\pr_1$ to a dependency marking results in the classical Petri net
notion of marking and similar for the other notions introduced in this section.
We will however mainly employ the versions defined here and drop the qualifier
``dependency'' most of the time.
A token $(s, P) \in M$ is $Q$-dependent iff $Q \subseteq P$ and $Q$-independent
iff $P \cap Q = \varnothing$.

To simplify the following argumentation we use some abbreviations.
$\production{\mu}_N$ denotes a labelled step on a single transition labelled
$\mu$. $\Goesto[\,a~]_N$ denotes a step on $a$ surrounded by arbitrary
$\tau$-steps, i.e., $\Goesto[]_N$ abstracts from $\tau$-steps.

\begin{definition}{steprel}{
  Let $N = (S, T, F, M_0, \ell)$ be a labelled net.
  }
  \vspace{-3ex}
    \item  We extend the labelling function $\ell$ to (multi)sets element-wise.
    \item $\mathord{\production{}_N} \subseteq \powerset{S\times\powerset{\Act}} \times
      \powermultiset{\Act} \times \powerset{S\times\powerset{\Act}}$
      is given by\\ $~\qquad M_1 \production{A}_N M_2 \equivalent
      \exists\, G \subseteq T. M_1~[G\rangle_N~ M_2 \wedge A = \ell(G)$
    \item $\mathord{\production{\tau}_N} \subseteq \powerset{S\times\powerset{\Act}} \times
      \powerset{S\times\powerset{\Act}}$
      is defined by\\ $~\qquad M_1 \production{\tau}_N M_2 \equivalent
      \exists t \inp T. \ell(t) \mathbin= \tau \wedge M_1 ~[\{t\}\rangle_N~ M_2$
    \item $\mathord{\Production{}_N} \subseteq \powerset{S\times\powerset{\Act}} \times \Act^* \times
      \powerset{S\times\powerset{\Act}}$ is defined by\\
      $~\qquad M_1 \Goesto[\,a_1 a_2 \cdots a_n~]_N M_2 \equivalent
      M_1
      \production{\tau}^*_N \production{\{a_1\}}_N
      \production{\tau}^*_N \production{\{a_2\}}_N
      \production{\tau}^*_N \cdots
      \production{\tau}^*_N \production{\{a_n\}}_N
      \production{\tau}^*_N
      M_2$\\[3pt]
      where $\production{\tau}^*_N$ denotes the reflexive and
      transitive closure of $\production{\tau}_N$.

  We omit the subscript $N$ if clear from context.
\end{definition}
\noindent
We write $M_1 \production{A}_N$ for $\exists M_2. M_1
\production{A}_N M_2$, $M_1 \arrownot\production{A}_N$ for $\nexists M_2. M_1
\production{A}_N M_2$ and similar for the other two relations.
Likewise $M_1 [G\rangle_N$ abbreviates $\exists M_2. M_1 [G\rangle_N M_2$.
A marking $M_1$ is said to be \defitem{reachable} iff there is a
sequence of labels $\sigma \in \Act^*$ such that $M_0 \times \{\emptyset\} \Production{\sigma}_N M_1$. The set of all
reachable markings is denoted by $[M_0\rangle_N$.

As said before, here we only want to consider 1-safe nets. Formally,
we restrict ourselves to \defitem{contact-free nets}, where in every
reachable marking $M_1 \in [M_0\rangle$ for all $t \in T$ with
  $\precond{t} \subseteq \pr_1(M_1)$\vspace{-1ex}
\begin{equation*}
  (\pr_1(M_1) \setminus \precond{t}) \cap \postcond{t} = \varnothing \trail{.}
\vspace{1pt}
\end{equation*}
For such nets, in \refdf{steps} we can just as well consider a
transition $t$ to be enabled in $M$ iff $\precond{t}\subseteq \pr_1(M)$, and
two transitions to be independent when $\precond{t} \cap \precond{u} =
\varnothing$.

\section{Distributed Nets}\label{sec-distr}

After having introduced Petri nets in general, we still need to find a
notion of such a net being distributed before being able to answer the
question of distributed implementability. A straightforward approach is
to assign to each net element a \defitem{location}, place sensible restrictions
on arrows crossing location borders, and restrict the sets of net elements being
allowed to reside on the same location.

We will regard locations as sequential execution units of the
underlying system, each one able to execute at most one action during each step. This
necessitates that no pair of transitions firing in the same step can reside on
the same location. Additionally, if locations are indeed physically apart as
their name suggests, communication between them can only proceed asynchronously.

We discussed a very similar notion of distribution in
\cite{glabbeek08syncasyncinteraction}, whence
the following description and definition of the present version have been derived from.
The central insight from that paper is that the synchronous removal of tokens
from preplaces of a transition is essential to the conflict resolution taking
place between multiple enabled transitions and that hence transitions must
reside on the same location as their preplaces.

We model the association of locations to the places and transitions
in a net $N=(S,T,F,M_0,\ell)$ as a function $D: S\cup T \rightarrow
\Loc$, with $\Loc$ a set of possible locations.  We refer to such a
function as a \defitem{distribution} of $N$.  Since the identity of
the locations is irrelevant for our purposes, we can just as well
abstract from $\Loc$ and represent $D$ by the equivalence relation
$\equiv_D$ on $S\cup T$ given by $x \equiv_D y$ iff $D(x)=D(y)$.

\begin{definition}{distributed}{
  Let $N = (S, T, F, M_0, \ell)$ be a net.
  }
  The \defitem{concurrency relation} $\mathord{\concurrent} \!\subseteq\!
  T^2$ is given by $t \!\concurrent\! u \equivalent t \!\ne\! u \wedge \exists
  M \inp [M_0\rangle.\! M [\{t,\!u\}\rangle$.
  $N$ is \defitem{distributed} iff it has a distribution $D$ such that
  \begin{itemize}
    \item
      $\forall s \in S, ~t \in T.\hspace{1pt}s \in \precond{t} \implies t \equiv_D s$,
    \item
     $t \concurrent u \implies t\not\equiv_D u$.
  \end{itemize}
\end{definition}

\noindent
It is straightforward to give a semi-structural\footnote{mainly structural, but with a reachability side-condition}
characterisation of this class of nets:

\begin{observation}{distributed}{}
A net is distributed iff there is no sequence $t_0,\ldots,t_n$ of
transitions with $t_0 \smile t_n$ and
$\precond{t_{i-1}}\cap\precond{t_{i}}\neq\emptyset$ for $i=1,\ldots,n$.
\end{observation}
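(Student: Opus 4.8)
The plan is to reduce the existence of a valid distribution in \refdf{distributed} to a relational condition on presets, and then to recognise that condition as exactly the negation of the one stated. The key observation is that the first clause of \refdf{distributed} forces each transition onto the location of every one of its preplaces; hence, writing $\equiv$ for the least equivalence relation on $S \cup T$ that identifies $s$ with $t$ whenever $s \in \precond{t}$, the map $D$ sending every net element to its $\equiv$-class is the finest distribution satisfying the first clause, and any distribution $D'$ meeting the first clause has $\equiv_{D'}$ coarser than $\equiv$. Consequently a distribution witnessing both clauses exists if and only if this particular $D$ already satisfies the second clause, so it suffices to prove: $\equiv$ separates every pair of transitions related by $\concurrent$ iff no sequence $t_0,\dots,t_n$ with $t_0 \concurrent t_n$ and $\precond{t_{i-1}}\cap\precond{t_i}\neq\emptyset$ for $i=1,\dots,n$ exists.

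For the direction from non-existence of such a sequence to separation (the ``if'' part), I would argue by contraposition: if $t \concurrent u$ but $t \equiv u$, then, as $\concurrent$ is irreflexive so $t\neq u$, there is a non-trivial $\equiv$-chain between $t$ and $u$. Because the relation generating $\equiv$ is contained in $S\times T$, after deleting reflexive repetitions this chain strictly alternates places and transitions and has a transition at each end, so it reads off as transitions $t = t'_0, t'_1,\dots,t'_k = u$ where each consecutive pair is separated by a place lying in both their presets; together with $t'_0 \concurrent t'_k$ this is precisely a forbidden sequence. For the converse direction (the ``only if'' part), given a forbidden sequence $t_0,\dots,t_n$ I would, for each $i$, pick $s_i \in \precond{t_{i-1}}\cap\precond{t_i}$, use the first clause of \refdf{distributed} to get $t_{i-1}\equiv s_i \equiv t_i$, chain these to obtain $t_0 \equiv t_n$, and contradict the second clause via $t_0 \concurrent t_n$. (The degenerate case $n=0$ cannot arise, again by irreflexivity of $\concurrent$.)

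The step I expect to be the only mildly delicate one is the chain-extraction in the ``if'' direction: one must argue carefully that the symmetric, reflexive and transitive closure of a relation contained in $S\times T$, whenever it relates two transitions, decomposes into a sequence of ``shared preplace'' hops, and one must dispose cleanly of the trivial cases (a transition with empty preset forming a singleton $\equiv$-class, and the vacuous length-zero sequence). Everything else is routine translation between the two clauses of \refdf{distributed} and the relational reformulation.
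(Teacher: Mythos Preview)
Your argument is correct and is precisely the standard unpacking of \refdf{distributed}: build the finest equivalence $\equiv$ forced by the preset clause, note that distributability is equivalent to this particular $\equiv$ separating all $\concurrent$-related pairs, and then observe that a $\equiv$-chain between two transitions is exactly a sequence of shared-preplace hops. The chain-extraction step is fine once one notes that a minimal witness for $t\equiv u$ with $t\neq u$ uses no reflexive steps, so alternation between $S$ and $T$ is automatic.

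There is nothing to compare against: the paper states this as an observation and gives no proof whatsoever, evidently regarding it as immediate from the definition. Your write-up is more than the paper offers; if anything, you could trim the discussion of edge cases (empty presets, $n=0$) since they fall out trivially.
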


\section{Completed Pomset Trace}\label{sec-cpt}

We now motivate the equivalence relation used for the rest of the paper
by means of highlighting some possible shortcomings of implementations one
would intuitively like to avoid.

\begin{figure}[t]
  \begin{center}
    \begin{petrinet}(10,6)
      \P(4,5):px2;
      \P(6,5):qy2;
      \t(3,3.5):ta:$\tau$;
      \t(5,3.25):tb:$\tau$;
      \t(7,3.5):tc:$\tau$;
      \t(1,1):a:a;
      \t(5,1):b:b;
      \t(9,1):c:c;
      \p(2,2.25):pa;
      \p(5,2):pb;
      \p(8,2.25):qc;
      
      \P(5,4.5):l;

      \a pa->a; \a pb->b;
      \a qc->c; \a qb->b;
      \av a[90]-(1,5)->[180]px2;
      \av c[90]-(9,5)->[0]qy2;

      \a px2->ta; \a px2->tb; \a qy2->tc; \a qy2->tb;
      \a ta->pa; \a tb->pb; \a tc->qc;

      \A ta->l; \A l->ta;
      \A tb->l; \A l->tb;
      \A tc->l; \A l->tc;

      \psline[linestyle=dotted](2.2,6)(2.2,3)(2.5,2.7)(7.5,2.7)(7.8,3)(7.8,6)
      \psline[linestyle=dotted](4,2.6)(4,0)
      \psline[linestyle=dotted](6,2.6)(6,0)
    \end{petrinet}
  \end{center}
  \caption{A centralised implementation of \reffig{counterexample}, location borders dotted.}
  \label{fig-central}
\end{figure}

When trying to implement a synchronous Petri net by a distributed one, one
of the easiest approaches is central serialisation of the entire original net
by introduction of a single new place connected with loops to every transition,
thereby vacuously fulfilling the requirement that no parallel transitions may
reside on the same location. This clearly loses parallelism. We illustrate in
\reffig{central} the result of applying a slightly more intricate variant of this scheme,
where every
visible step of the original still exists in the implementation,
to the repeated pure \textbf{M}.
Nonetheless,
this approach is intuitively not scalable, as all decisions made concurrently
in the original net are now made in sequence.
In particular, the parts of the net firing $a$ were completely independent of
those parts firing $c$ in the specification, while being connected trough the
central place in the implementation. Such new dependencies can be detected if
the causal dependencies between events are included in the behavioural description
of a net. Apart from the obvious implications for scalability, if a Petri net is
used as an abstract description of a more concrete system, a new dependency
might enable interactions between different parts of the system the designer
did not take into account.
Hence we would like to disallow such a strategy
by means of the equivalence between specification and implementation.

No such causalities are introduced by the implementation in \reffig{deadlock}.
There however, one of the cycles of $a$'s or $c$'s may spontaneously decide to
commit to the $b$ action and wait until the other does likewise, resulting in
what is essentially a local deadlock. Compared to the original net, where
$a$ stayed enabled until $b$ was fired, such behaviour is new.
Trying to resolve this deadlock by adding a $\tau$-transition in the reverse
direction would introduce a diverging computation not present in the original net.

\begin{figure}[t]
  \begin{center}
    \begin{petrinet}(10,4)
      \p(2,3):pa;
      \p(4,3):pb;
      \p(6,3):qb;
      \p(8,3):qc;
      \t(1,1):a:a;
      \t(5,1):b:b;
      \t(9,1):c:c;

      \t(3,3):pb1:$\tau$;
      \t(7,3):qb1:$\tau$;

      \A pa->a; \a pb->b;
      \A qc->c; \a qb->b;
      \A a->pa; \A c->qc;

      \a pa->pb1; \a pb1->pb;
      \a qc->qb1; \a qb1->qb;

      \psline[linestyle=dotted](3.5,4)(3.5,0)
      \psline[linestyle=dotted](6.5,4)(6.5,0)
    \end{petrinet}
  \end{center}
  \caption{A locally deadlocking implementation of \reffig{counterexample}, location borders dotted.}
  \label{fig-deadlock}
\end{figure}

All these deviations from the original behaviour can elegantly be captured
by the causal equivalence from \cite{schicke09synchrony}, called
completed pomset trace equivalence. It extends the
pomset trace equivalence of \cite{pratt85pomset} as to detect local
deadlocks, which can be regarded as unjust executions in the sense of
\cite{reisig84partialorder}.

Pomset trace equivalence is obtained by unrolling a Petri net into a
process as defined by \cite{petri77nonsequential}. Such a process can be understood
to be an account of one particular way to decide all conflicts which occurred
while proceeding from one marking to the next. The behaviour of the net
is hence a set of these processes, covering all possible ways to decide conflicts.

Unrolling a net $N$ intuitively proceeds as follows: The initially marked
places of $N$ are copied into a new net $\NN$ and their correspondence to the original
places recorded in a mapping $\pi$. Then, whenever in $N$ a transition $t$ is fired,
this is replayed in $\NN$ by a new transition
connected to places corresponding by $\pi$ to the original preplaces of $t$
and which are not yet connected to any other post-transition.
A new place of $\NN$ is created
for every token produced by $t$. Again all correspondences are recorded in $\pi$.
Every place of $\NN$ has thus at most one post-transition. If it has none, this place
represents a token currently being placed on the corresponding original place.

As a shorthand notation to gather these places, we introduce the \defitem{end}
of a net.

\begin{definition}{netend}{
  Let $N = (S, T, F, M_0, \ell)$ be a labelled net.
  }
  The \defitem{end} of the net is defined as
  $N^\circ := \{s \in S \mid \postcond{s} = \varnothing\}$.
\end{definition}

\begin{definition}{process}{}
 \vspace{-3ex}
 \item[]
  A pair $\PP = (\NN, \pi)$ is a
  \defitem{process} of a net $N = (S, T, F, M_0, \ell)$
  iff
  \begin{itemize}\itemsep 3pt
   \item $\NN = (\SS, \TT, \FF, \MM_0, \ELL)$ is a net, satisfying
   \begin{itemize}
    \item $\forall s \in \SS. |\precond{s}| \leq\! 1\! \geq |\postcond{s}|
    \wedge\, s \in \MM_0 \equivalent \precond{s}=\emptyset$
    \item $\FF$ is acyclic, \ie
      $\forall x \inp \SS \cup \TT. (x, x) \mathbin{\not\in} \FF^+$,\\
      where $\FF^+$ is the transitive closure of $\{(t,u)\mid \FF(t,u)>0\}$,
    \item and $\{t \mid (t,u)\in \FF^+\}$ is finite for all $u\in \TT$.
   \end{itemize}
    \item $\pi:\SS \cup \TT \rightarrow S \cup T$ is a function with 
    $\pi(\SS) \subseteq S$ and $\pi(\TT) \subseteq T$, satisfying
   \begin{itemize}
    \item $s \in M_0 \equivalent |\pi^{-1}(s) \cap \MM_0| = 1$ for all $s\in S$,
    \item $\pi$ is injective on $\MM_0$,
    \item $\forall t \in \TT, s \in S.
      F(s, \pi(t)) = |\pi^{-1}(s) \cap \precond{t}| \wedge
      F(\pi(t), s) = |\pi^{-1}(s) \cap \postcond{t}|$, and
    \item $\forall t \in \TT. \ELL(t) = \ell(\pi(t))$.\footnote{While $\ell$ and $\ELL$ look nearly identical, the authors see no problem in that, given the close correspondence.}
  \end{itemize}
  \end{itemize}
  $\PP$ is called \defitem{finite} if $\NN$ is finite.

  $\PP$ is \defitem{maximal} iff $\pi(\NN^\circ) \arrownot\production{}_N$.
  The set of all maximal processes of a net $N$ is denoted by $MP(N)$.
\end{definition}

\noindent
To disambiguate between a not-yet-occurred firing of a transition $a$ and
the impossibility of firing an $a$, we restrict the set of processes relevant
for the behavioural description to maximal processes. We thereby obtain
a just semantics in the sense of \cite{reisig84partialorder}, i.e. a transition
which remained enabled infinitely long must ultimately fire.

To abstract from the $\tau$-actions introduced in an implementation, we
extract from the maximal processes the causal structure between the fired
visible events in the form of a partially ordered multiset (\defitem{pomset}).
Formally, a pomset is an isomorphism class of a partially ordered multiset of action labels.

\begin{definition}{lpo}{}
  A \defitem{labelled partial order} is a structure $(V, T, \leq, l)$
  where
  \begin{itemize}
    \item $V$ is a set (of \defitem{vertices}),
    \item $T$ is a set (of \defitem{labels}),
    \item $\leq \,\, \subseteq V \times V$ is a partial order relation and
    \item $l: V \rightarrow T$ (the \defitem{labelling function}).
  \end{itemize}

  Two labelled partial orders $o = (V, T, \leq, l)$ and $o' = (V', T, \leq', l')$ are
  \defitem{isomorphic}, $o \isomorph o'$, iff there exist a bijection
  $\varphi: V \rightarrow V'$ such that
  \begin{itemize}
    \item $\forall v \in V. l(v) = l'(\varphi(v))$ and
    \item $\forall u, v \in V. u \leq v \equivalent \varphi(u) \leq' \varphi(v)$.
  \end{itemize}
\end{definition}

\begin{definition}{pomset}{
  Let $o = (V, T, \leq, l)$ be a partial order.
  }
  The \defitem{pomset} of $o$ is its isomorphism class $[o] := \{o' \mid o \isomorph o'\}$.
\end{definition}

\noindent
By hiding the unobservable transitions of a process, we gain a pomset which
describes causality relations of all participating visible transitions.

\begin{definition}{mvp}{
  Let $\PP = ((\SS, \TT, \FF, \MM_0, \ELL), \pi)$ be a process.
  }
  Let $\OO := \{t \in \TT \mid \ELL(t) \ne \tau\}$, i.e. the visible transitions of the process.
  The \defitem{visible pomset} of $\PP$ is the pomset
  $VP(\PP) := [(\OO, \Act, \FF^* \cap \OO \times \OO, \ELL \cap (\OO \times \Act))]$
  where $\FF^*$ is the
  transitive and reflexive closure of the flow relation $\FF$.
  
  $\MVP(N) := \{VP(\PP) \mid \PP \in MP(N)\}$ is the set of pomsets of all maximal
  processes of $N$.
\end{definition}

\noindent
Using this notion we can now define completed pomset trace equivalence.

\begin{definition}{jpte}{}
  Two nets $N$ and $N'$ are \defitem{completed pomset trace equivalent}, $N \simeq_{CPT} N'$, iff
  $\MVP(N) = \MVP(N')$.
\end{definition}

\section{Impossibility}\label{sec-impossible}

\begin{figure}[t]
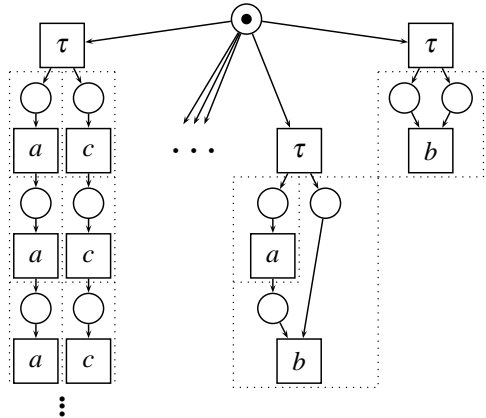

  \begin{center}
    \begin{petrinet}(10,8)
      \P(5,7.5):p1;

      \t(1.5,7):t1:$\tau$;
      \p(1,6):p1-1;
      \p(2,6):p1-2;
      \t(1,5):x1-1:a;
      \t(2,5):y1-1:c;
      \p(1,4):p1-3;
      \p(2,4):p1-4;
      \t(1,3):a1-1:a;
      \t(2,3):c1-1:c;
      \p(1,2):p1-5;
      \p(2,2):p1-6;
      \t(1,1):x1-2:a;
      \t(2,1):y1-2:c;
      \rput(1.5,0.25){\Huge $\vdots$}
      \a p1->t1; \a t1->p1-1; \a t1->p1-2;
      \a p1-1->x1-1; \a p1-2->y1-1;
      \a x1-1->p1-3; \a y1-1->p1-4;
      \a p1-3->a1-1; \a p1-4->c1-1;
      \a a1-1->p1-5; \a c1-1->p1-6;
      \a p1-5->x1-2; \a p1-6->y1-2;

      \rput(4,5){\Huge $\hdots$}
      \pnode(4,5.5){n\thenetimage-dots1}
      \pnode(3.8,5.5){n\thenetimage-dots1l}
      \pnode(4.2,5.5){n\thenetimage-dots1r}
      \a p1->dots1;
      \a p1->dots1l;
      \a p1->dots1r;

      \t(6,5):t3:$\tau$;
      \p(5.5,4):p3-1;
      \p(6.5,4):p3-2;
      \t(5.5,3):a3-1:a;
      \p(5.5,2):p3-3;
      \t(6,1):b3-1:b;

      \a p1->t3;
      \a t3->p3-1; \a t3->p3-2;
      \a p3-1->a3-1;
      \a a3-1->p3-3;
      \a p3-3->b3-1;
      \a p3-2->b3-1;

      \t(8.5,7):t2:$\tau$;
      \p(8,6):p2-1;
      \p(9,6):p2-2;
      \t(8.5,5):b2-1:b;
      \a p1->t2; \a t2->p2-1; \a t2->p2-2;
      \a p2-1->b2-1; \a p2-2->b2-1;

      \psline[linestyle=dotted](0.5,6.5)(2.5,6.5)
      \psline[linestyle=dotted](7.5,6.5)(9.5,6.5)
      \psline[linestyle=dotted](4.75,4.5)(7.5,4.5)
      \psline[linestyle=dotted](7.5,6.5)(7.5,0.5)(4.75,0.5)(4.75,4.5)
      \psline[linestyle=dotted](7.5,4.5)(9.5,4.5)(9.5,6.5)
      \psline[linestyle=dotted](0.5,4.5)(2.5,4.5)
      \psline[linestyle=dotted](0.5,2.5)(2.5,2.5)
      \psline[linestyle=dotted](0.5,6.5)(0.5,0.5)
      \psline[linestyle=dotted](1.5,6.5)(1.5,0.5)
      \psline[linestyle=dotted](2.5,6.5)(2.5,0.5)
      \psline[linestyle=dotted](4.75,2.5)(6.0,2.5)(6.0,4.5)
    \end{petrinet}
  \end{center}
  \caption{An infinite implementation of \reffig{counterexample}, constructed by taking every maximal process and
    initially choosing one, location borders dotted.}
  \label{fig-infinite}
\end{figure}

As completed pomset trace equivalence is a very linear-time equivalence,
it disregards the decision structure of a system and
an implementation like the one of \reffig{infinite}, which simply provides a separate branch for each possible maximal process of the original net, would be fully
satisfactory. In practice though, such an infinite implementation is unwieldy to
say the least. If however infinite implementations are ruled out,
our main result shows that
no valid implementation
of the repeated pure \textbf{M} of \reffig{counterexample} exists.

Before we consider this main theorem of the paper, let us concentrate on two
auxiliary lemmata. The first
states that the careful introduction of a
$\tau$-transition before an arbitrary transition of a net, as described below,
does not significantly influence the properties of that net.

\begin{lemma}{implementationrefinement}{
  Let $N = (S, T, F, M_0, \ell)$ be a finite, 1-safe, distributed net with
  the distribution function $D$. Let $t \in T$.
  }
  The net $N' = (S', T', F', M_0, \ell')$ with
  \begin{itemize}\vspace{-1ex}
    \item $S' = S \cup \{s_t\}$,
    \item $T' = T \cup \{\tau_t\}$,
    \item $F' = (F \setminus (S \times \precond{t})) \cup
      \{(s, \tau_t) \mid s \in \precond{t}\} \cup
      \{(\tau_t, s_t), (s_t, t)\}$, and
    \item $\ell'(x) = \begin{cases}
        \tau&\text{ if } x = \tau_t\\
        \ell(x)&\text{ otherwise}
      \end{cases}$
  \end{itemize}
  is finite, 1-safe, distributed and completed pomset trace equivalent to $N$.
\end{lemma}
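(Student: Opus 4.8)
The plan is to establish the four claimed properties almost independently, each by reduction to the corresponding property of $N$ via a tight structural correspondence between $N$ and $N'$. \emph{Finiteness} is immediate ($N'$ adds one place and one transition). For \emph{distributedness}, extend $D$ to $D'$ with $D'(\tau_t) := D'(s_t) := D(t)$; the first clause of \refdf{distributed} is then clear, since the only new preplace/transition incidences are $s \to \tau_t$ for $s \in \precond{t}$ — and such $s$ already satisfied $s \equiv_D t$ — together with the on-location arc $s_t \to t$. For the concurrency clause I would use Observation~\ref{cj-distributed}: in $N'$, $t$ has $s_t$ as its only preplace and $s_t$ has $t$ as its only post-transition, so no chain of distinct transitions sharing preplaces can contain $t$; replacing each $\tau_t$ in such a chain by $t$ (legitimate, as $\tau_t$ has $N'$-preset $\precond{t}$) turns a forbidden chain of $N'$ into one of $N$, using in addition that $t$, resp.\ $\tau_t$, is concurrent with $u \in T$ in $N'$ exactly when $t$ is concurrent with $u$ in $N$. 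As $N$ has no forbidden chain, neither does $N'$.

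The technical core is a description of the reachable markings, and then of the maximal processes, of $N'$. For \emph{1-safety} I would show by induction on reachability that every $M_1 \in [M_0\rangle_{N'}$ is either (a) a reachable marking of $N$ with $s_t$ unmarked, or (b) of the form $\mathrm{split}_t(M) := \{p \in M \mid \pr_1(p) \notin \precond{t}\} \cup \{(s_t,\, \bigcup_{p \in M,\ \pr_1(p) \in \precond{t}} \pr_2(p))\}$ for some reachable $M$ of $N$ with $\precond{t} \subseteq \pr_1(M)$. Only three firings need checking: $\tau_t$ takes an (a)-marking with $\precond{t}$ marked to $\mathrm{split}_t(M)$, depositing on $s_t$ — since $\ell'(\tau_t) = \tau$ — precisely the causes $t$ itself would have gathered; $t$ takes $\mathrm{split}_t(M)$ back to the (a)-marking that results from firing $t$ in $N$, the token on $s_t$ recarrying those causes; and any other $u$, if enabled in a (b)-marking, has preset disjoint from $\precond{t}$ and — by contact-freeness of $N$ — postset disjoint from $\precond{t}$, so the $\mathrm{split}_t$ shape survives. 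Contact-freeness of $N'$ then follows by a short case analysis that in every case reduces to contact-freeness of $N$; the only new point, a double token on $s_t$, is impossible since in a (b)-marking all of $\precond{t}$ is empty, disabling $\tau_t$ (here one uses $\precond{t} \neq \emptyset$).

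For \emph{completed pomset trace equivalence} I would construct mutually inverse maps between $MP(N)$ and $MP(N')$ that contract, respectively expand, every transition mapped to $t$. In a process of $N'$, the conditions of \refdf{process} together with the facts that $s_t$ is the only preplace of $t$ and the only postplace of $\tau_t$ force each occurrence of $\tau_t$ to be followed by a unique copy of $s_t$ and then a unique occurrence of $t$; collapsing each chain $\tau_t \to s_t \to t$ to a single $t$-transition yields a process of $N$, and inserting such gadgets yields a process of $N'$. Maximality is preserved both ways: an unconsumed copy of $s_t$ at the end of a process would leave $t$ enabled, and $\tau_t$ is enabled at $\pi(\NN^\circ)$ exactly when $t$ is. The contraction leaves the order induced on the visible transitions unchanged — each causal path through $\tau_t, s_t, t$ corresponds to a direct path ending in $t$ in $N$, and vice versa — while $\tau_t$ itself is invisible, hence discarded by $VP$. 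Therefore $VP$ is preserved, $\MVP(N) = \MVP(N')$, and $N \simeq_{CPT} N'$.

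I expect the main obstacle to be the bookkeeping in the marking invariant: matching the dependency annotations exactly, and recognising that contact-freeness of $N$ is precisely what prevents $\precond{t}$ from being re-marked while $s_t$ is occupied — the single point where the argument would fail if $N$ were not contact-free. The process correspondence is conceptually routine but still requires care to be shown well-defined, mutually inverse, and maximality-preserving in both directions.
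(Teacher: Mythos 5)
Your proposal is correct and follows essentially the same route as the paper's (much terser) proof sketch: extend $D$ by placing $s_t$ and $\tau_t$ on $D(t)$, and establish completed pomset trace equivalence via the contraction/expansion correspondence between maximal processes that replaces each chain $\tau_t\to s_t\to t$ by a single occurrence of $t$. Your added detail — the explicit reachable-marking invariant for 1-safety (where the paper just appeals to the process correspondence) and the use of Observation~1 for distributedness — merely fills in the sketch rather than changing the argument.
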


\begin{proof}(Sketch)\\
  $N'$ is finite as only two new elements were introduced.

  $N'$ is completed pomset trace equivalent to $N$. Given a process $(\NN,
  \pi)$ of $N$, a process of $N'$ can be constructed by refining in $\NN$ every
  transition $u$ in the same manner as $\pi(u)$ was in $N$.
  For the reverse direction, note that in every maximal processes of $N'$,
  $\pi(u) = t \implies \pi(\precond{u}) = \{s_t\} \wedge \pi(\precond{s_t}) = \{\tau_t\}$.
  By fusing $u$, $\precond{u}$, and ${}^\bullet{\precond{u}}$ into a single
  transition $v$ whenever $\pi(u) = t$ and setting the process mapping of $v$ to $t$,
  a maximal process of $N'$ can be transformed into a maximal process of $N$.

  For the same reason, $N'$ is also 1-safe.

  \noindent
  $N'$ is distributed with the distribution function
  $D'(x) := \begin{cases}
    D(t)&\text{ if } x = s_t \vee x = \tau_t\\
    D(x)&\text{ otherwise }
  \end{cases}$.
  The places in $\precond{\tau_t}$ are on $D(t) = D'(\tau_t)$. $D'(s_t) = D(t) = D'(t)$.
  Hence all transitions are on the same location as their preplaces.
  No new parallelism is introduced, as a parallel firing of either $\tau_t$ or $t$
  with some other transition $u$ can only occur if $t$ and $u$ could already fire in parallel
  in $N$.
\end{proof}

\noindent
Next we show, that if a marking is reached twice during an execution,
the dependencies of all tokens consumed and produced by a transition firing in
such a cycle are equal.

\begin{lemma}{dependenceinloop}{
  Let $N = (S, T, F, M_0, \ell)$ be a finite, 1-safe net.
  Let $t_s, t_{s+1}, \ldots, t_{e-1}, t_e \in T$ be a sequence of transitions
  leading from a reachable marking $M_{base}$ to the same, i.e.
  $M_{base}\goesto[\{t_s\}]\cdots\goesto[\{t_e\}]M_{base}$.
  }
  Then every $t_i$ produced tokens that were dependent
  on the same labels as the tokens on its preplaces.
\end{lemma}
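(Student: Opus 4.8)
The plan is to reduce the statement to one set inclusion and then exploit that, $M_{base}$ being revisited, the whole loop $M_{base}\goesto[\{t_s\}]\cdots\goesto[\{t_e\}]M_{base}$ can be run again and again from $M_{base}$, producing an infinite execution which is \emph{purely periodic}; every dependency marking it passes through recurs unchanged, so the dependency set $P(s)$ of the token on $s$ in $M_{base}$ reappears in every copy. If $\postcond{t_i}=\varnothing$ the conclusion is vacuous, so fix a firing $t_i$ in the loop with $\postcond{t_i}\neq\varnothing$ and let $Q_i=(\{\ell(t_i)\}\setminus\{\tau\})\cup\bigcup\{\pr_2(p)\mid p\text{ a token on a preplace of }t_i\text{ when it fires}\}$ be the label set that \refdf{steps} attaches to every token $t_i$ produces. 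By construction $Q_i\supseteq\pr_2(p)$ for each such preplace token $p$, so the claim is equivalent to the reverse inclusion $Q_i\subseteq\pr_2(p)$ for all $p$; that inclusion simultaneously forces all preplace tokens of $t_i$ to carry the same labels and, when $\ell(t_i)\neq\tau$, forces $\ell(t_i)$ to be among them.

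Two monotonicity facts drive the rest. Call $p$ an \emph{ancestor} of $q$ in the iterated execution when $q$ sits on a postplace of some firing that consumed $p$, closed transitively; then $\pr_2(p)\subseteq\pr_2(q)$, immediately from the union in \refdf{steps}. Also $N$ is finite, so only finitely many label sets occur as values of $\pr_2$, and every token of the execution has an ancestor already present in the first copy of $M_{base}$. Now suppose for contradiction that some firing $t_i$ is \emph{strict}: it attaches a label $a$ to a produced token $y$ that is missing from a preplace token $x$ it consumes. Tracing $x$ down to an $M_{base}$-token on some place $s_0$ and $y$ up through its descendants to the next copy of $M_{base}$ on some place $s_1$ (such a descendant exists because, in the absence of sink transitions, every token of one copy has a descendant in the next), one obtains $P(s_0)\subseteq\pr_2(x)\subsetneq\pr_2(y)\subseteq P(s_1)$, hence $P(s_0)\subsetneq P(s_1)$ and ``$s_0$ flows to $s_1$ in one period''. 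Iterating the flow and using finiteness of $\pr_1(M_{base})$, the places entered eventually cycle, the $P$-values along that cycle are squeezed to one common set, and the chain feeding $s_0$ into it begins with a strict increase.

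The main obstacle is converting this into an actual contradiction, i.e.\ showing that the token flow over one period cannot reproduce the multiset $\{\!\{P(s)\mid s\in\pr_1(M_{base})\}\!\}$ unless \emph{every} firing is non-strict --- the difficulty being that a transition may consume and produce several tokens at once, so ``flow'' between consecutive copies of $M_{base}$ is a relation, not a bijection. I would attack it by a minimality argument: take a $\subseteq$-minimal set $P^{*}$ occurring in $M_{base}$; by monotonicity its occurrences in one copy can descend only from $P^{*}$-occurrences of the previous copy, and using 1-safety together with the fact that every token of a copy has a descendant in the next, the flow restricted to $P^{*}$-tokens is a bijection preserving the set; then remove $P^{*}$ and repeat on the rest. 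Transitions with empty postset must be excluded throughout this combinatorial step --- the conclusion is vacuous for them, but they could otherwise let a label drop out of the marking and break the ``no descendant lost'' property. Everything else --- periodicity, monotonicity, and finiteness --- is routine bookkeeping.
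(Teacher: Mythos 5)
You have correctly isolated the crux --- reducing the lemma to the inclusion $Q_i\subseteq\pr_2(p)$ for every consumed token $p$ --- and your monotonicity and periodicity observations are sound. But the step you yourself flag as ``the main obstacle'' is not merely hard: it is unprovable by the counting route you propose. The claim that the dependency marking $M_{base}$ cannot be reproduced across one period unless every firing is non-strict is false if all you use is monotonicity, finiteness, and the fact that the firing rule never puts two tokens on one place. Concretely: take places $p_1,p_2,p_3$ whose $M_{base}$-tokens carry $\{a\},\{a\},\{a,b\}$; let $t_0$ consume the $\{a,b\}$-token on $p_3$ (with empty postset, or routed through a fresh place $p_4$ into $t_1$ if you insist on excluding sinks); let $t_1$, labelled $b$ (resp.\ $\tau$ in the $p_4$ variant), consume the $\{a\}$-token on $p_1$ and deposit an $\{a,b\}$-token on $p_3$; and let a $\tau$-transition $t_2$ consume the $\{a\}$-token on $p_2$ and produce $\{a\}$-tokens on both $p_1$ and $p_2$. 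The sequence $t_0,t_1,t_2$ is a legal firing sequence returning exactly to $M_{base}$, yet $t_1$ is strict. Your ``flow restricted to $P^*$-tokens'' is not a bijection here --- the $p_2$-token branches into two $\{a\}$-tokens while the $p_1$-token is upgraded and contributes none --- and the marking is nevertheless reproduced. So ``1-safety plus every token has a descendant in the next copy'' does not yield the bijection, and the remove-and-repeat induction has nothing to stand on.

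What rescues the lemma, and what your sketch never invokes, is contact-freeness as a property of the \emph{net} rather than of the particular run. In the example above, $\precond{t_2}=\{p_2\}$ is marked in the reachable marking $M_{base}$ while $p_1\in\postcond{t_2}\setminus\precond{t_2}$ is also marked, so the net is not 1-safe in the paper's sense --- that is the only reason it fails to refute the lemma. The paper's proof exploits exactly this point: if $t_i$ consumes an $L$-independent token but produces only $L$-dependent ones, then the $L$-independent token that must reappear to restore $M_{base}$ cannot causally depend on anything $t_i$ produced (otherwise its label set would meet $L$); hence the transitions manufacturing that replacement could all have fired without $t_i$ firing at all, dropping a second token onto a still-occupied preplace of $t_i$ and contradicting contact-freeness. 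That causal-independence-versus-contact argument is the missing idea; without it, or something equivalent that uses enabledness of hypothetical alternative firings, your purely combinatorial reduction cannot be closed.
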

\begin{proof}
  Assume the opposite, i.e. there is a $t_i$ for $s \leq i \leq e$ such that $t_i$
  consumed an $L$-independent token from one of its preplaces (for some $L \subseteq \Act$),
  but produced no $L$-independent tokens.
  This $L$-independent token needs to be replaced to again reach $M_{base}$.
  However the replacement token needs to be $L$-independent as otherwise a dependency marking
  different from $M_{base}$ would be reached.
  This token can thus not depend on any of the tokens produced by $t_i$, as it
  would then not be $L$-independent.
  In other words, had $t_i$ not fired, a new $L$-independent
  token could also have been produced on its preplaces, i.e. $N$ would not be
  1-safe, violating the assumptions. Hence no such $t_i$ can be fired,
  or equivalently, every $t_i$ produced tokens that were dependent
  on the same labels as the tokens on its preplaces (which hence all have the
  same dependencies).
\end{proof}

\noindent
We will now show that, given an arbitrary finite, 1-safe net, it is not possible in general to
find a finite, 1-safe, and distributed net which is completed pomset trace
equivalent to the original. As a counterexample, consider the repeated pure \textbf{M} of Figure
\ref{fig-counterexample}. It is a simple net allowing to perform several
transitions of $a$ and $c$ in parallel, and terminating with a single
transition $b$. The main argument of the following proof proceeds as follows: To perform an
arbitrary number of $a$ and $c$-transitions within a finite net there has to be a
loop. To terminate with $b$ the process has to escape from that loop by
disabling all transitions leading to $a$ or $c$. Therefore either a single token is
consumed that is dependent on $a$ as well as on $c$, or two different tokens -- one
$a$-dependent and one $c$-dependent -- are consumed. In the first case an
additional iteration of the loop results in an additional causal dependency,
i.e., in a causal dependency between $a$ and $c$. In the second case the net is
not distributed in the sense of \refdf{distributed}.

\begin{theorem}{main}{}
  It is in general impossible to find for a finite, 1-safe net a
  distributed, completed pomset trace equivalent, finite, 1-safe net.
\end{theorem}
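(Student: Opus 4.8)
The plan is to take the repeated pure \textbf{M} of \reffig{counterexample}, call it $N$, as the counterexample — it is finite and 1-safe — and to show that no finite, 1-safe, distributed net is completed pomset trace equivalent to it. First I would compute $\MVP(N)$. Since $b$ has empty postset and empties the preplaces of both $a$ and $c$, while $a$ and $c$ only recirculate the tokens on $p$ resp.\ $q$, every finite maximal process fires $a$ some $m\ge0$ times, $c$ some $n\ge0$ times, and then exactly one $b$; the $a$-occurrences form a chain, the $c$-occurrences form a chain, the two chains are incomparable, and $b$ sits above everything. There is in addition exactly one infinite maximal process, firing $a$ and $c$ forever and never $b$. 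So $\MVP(N)=\{P_{m,n}\mid m,n\ge0\}\cup\{P_\infty\}$, and I record two features used below: (i) in no $P_{m,n}$ is any occurrence of $a$ causally related to any occurrence of $c$; (ii) every maximal process ends in a $b$ or equals $P_\infty$ — in particular there is no maximal process with finitely many $a$'s and infinitely many $c$'s, nor a finite one without a $b$.

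Now assume for contradiction that $N'$ is finite, 1-safe, distributed with $\MVP(N')=\MVP(N)$. Every visible transition of $N'$ carries a label in $\{a,b,c\}$, so every dependency set is a subset of this fixed three-element set and $N'$ has finitely many reachable dependency markings. Picking, for $m=n$ sufficiently large, a maximal process of $N'$ realising $P_{m,m}$ and reading off a firing sequence whose visible part interleaves the two chains (no $\tau$ can order an occurrence of $a$ before one of $c$ or vice versa, as that would make them comparable in $P_{m,m}$), the pigeonhole principle yields a reachable marking $M_{base}$ revisited after a nonempty cycle containing at least one $a$-transition $t_a$ and at least one $c$-transition $t_c$. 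By \reflem{dependenceinloop} no transition on this cycle changes any dependency: hence $M_{base}$ recurs as a dependency marking, the cycle may be iterated arbitrarily often before the rest of the run, every token on a preplace of $t_a$ inside the cycle is $\{a\}$-dependent, and every token on a preplace of $t_c$ is $\{c\}$-dependent. Moreover $t_a\concurrent t_c$ in $N'$: both occur in the cycle, hence in some maximal process realising a finite $P_{m',n'}$ with $m',n'\ge1$, there as an $a$-vertex and a $c$-vertex which are incomparable and therefore concurrent in the conflict-free process net; since $N'$ is 1-safe this concurrency descends to a reachable marking $M$ of $N'$ with $M[\{t_a,t_c\}\rangle$.

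Next I would argue that to reach the (necessarily existing) terminating $b$-suffix from $M_{base}$ without either creating a local deadlock or a maximal process with finitely many $a$'s and infinitely many $c$'s — both forbidden by (ii) — it must be $b$ itself that shuts down both sub-loops, i.e.\ $b$ consumes a token from $\precond{t_a}$ and a token from $\precond{t_c}$. This yields a dichotomy. Either (A) $b$ consumes a single token $\theta\in\precond{t_a}\cap\precond{t_c}$, so $\{t_a,t_c\}$ can never fire as a step (\refdf{steps}) and $t_a\not\concurrent t_c$; or (B) $b$ consumes two distinct tokens $\theta_a\in\precond{b}\cap\precond{t_a}$ and $\theta_c\in\precond{b}\cap\precond{t_c}$, whence by \refdf{distributed} $t_a\equiv_D\theta_a\equiv_D b\equiv_D\theta_c\equiv_D t_c$, while $t_a\concurrent t_c$ would force $t_a\not\equiv_D t_c$. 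Both horns contradict $t_a\concurrent t_c$, established above; hence no such $N'$ exists, and since $N$ is finite and 1-safe the theorem follows. (Case (A) is the point phrased in the sketch preceding the theorem as: one extra iteration of the cycle forces a causal order between an occurrence of $a$ and one of $c$, impossible by (i).)

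The step I expect to be the real work is the middle claim of the last paragraph — that the escape from both sub-loops is effected by $b$ alone. A priori some other transition, possibly a $\tau$, could empty a preplace of $t_a$ earlier; ruling this out is exactly where one must combine maximality of processes (no local deadlock) with the absence in $\MVP(N)$ of any pomset having finitely many $a$'s but infinitely many $c$'s: after such a premature escape on the $a$-side, running the still-live $c$-sub-loop forever would witness a forbidden maximal process, and symmetrically, so the two escapes must be synchronised — which, for a distributed net, is what \reflem{dependenceinloop} and the conflict/concurrency bookkeeping are set up to make impossible. The remaining ingredients — pinning down $\MVP(N)$ exactly, and transferring concurrency from an occurrence net down to the 1-safe net $N'$ — are routine but need care.
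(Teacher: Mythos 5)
Your overall strategy is the paper's: the same counterexample, finiteness of the set of dependency markings, pigeonhole to obtain a revisited marking with a cycle, \reflem{dependenceinloop} to sort the cycle's tokens into an $a$-dependent/$c$-independent pool and a $c$-dependent/$a$-independent pool, the ``no maximal process with finitely many $a$'s and infinitely many $c$'s'' argument to force both sub-loops to be disabled by one single transition firing, and a final clash between co-location (shared preplaces) and concurrency. Your computation of $\MVP$ of the counterexample and your observation that the synchronisation of the two escapes is the crux are both on target.

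The gap is in how you close the final clash. First, the disabling transition need not be the $b$-labelled transition: in any plausible implementation it is a $\tau$ that commits to $b$, and the paper even refines $b$ via \reflem{implementationrefinement} precisely so that the disabling provably happens strictly \emph{before} $b$. That alone is repairable by renaming. The real problem is the consequence: the two transitions that share preplaces with the disabling transition --- the pair your case (B) co-locates --- need not be your visible $t_a$ and $t_c$. The token whose removal blocks the $a$-sub-loop may lie on a preplace of a $\tau$-transition deep inside that sub-loop, and the $a$-labelled transition may share no preplace with the disabler at all. Such $\tau$'s are not vertices of any visible pomset, so your derivation of $t_a \concurrent t_c$ from incomparability in $P_{m,n}$ (and the descent through a process cut) says nothing about them, and neither horn of your dichotomy applies to the pair that actually matters. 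The paper closes this differently: since the $a$-half and the $c$-half of the cycle consume and produce disjoint token pools, their firing sequences can be launched independently from $M$ and concatenated, reaching a reachable marking that simultaneously enables the two preplace-sharing transitions whatever their labels; this yields $\concurrent$ for exactly the pair forced onto one location. You already have the disjointness from your own dependency bookkeeping, but as written your argument establishes concurrency for one pair of transitions and co-location for a possibly different pair, so the contradiction does not yet follow.
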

\begin{proof}
  Via the counterexample given in \reffig{counterexample}.
  Suppose a finite, 1-safe, distributed net $N_{impl}$, which is completed
  pomset trace equivalent to the net of \reffig{counterexample}, would exist.
  By refining every $b$-labelled transition in $N_{impl}$ into two transitions
  in the manner of \reflem{implementationrefinement},
  a new net $N = (S, T, F, M_0, \ell)$ is derived.
  By \reflem{implementationrefinement} this new net is finite, 1-safe, distributed and
  completed pomset trace equivalent to the net in \reffig{counterexample} since $N_{impl}$ is.

  $N$ has $|S|$ places and $3$ different labels,
  every place can hold either no token, or a token dependent on any possible combination
  of the three labels. Since $N$ is finite so is $|S|$.
  Hence $N$ has at most $9^{|S|}$ reachable dependency markings. Let $m := 9^{|S|}$.
  $N$ is able to fire $(ac)^mb$ without any step containing more than a single transition
  since the net of \reffig{counterexample} is and the two are assumed to be completed
  pomset trace equivalent.
  Let $G_1, G_2, \ldots G_n$ be the steps fired while doing so. $|G_i| = 1$ for all $i$.
  In the course of firing that sequence, at least one dependency marking is bound to be reached twice.
  Of all those dependency markings which occur twice, we take the one occurring last while
  firing $(ac)^mb$ and call it $M_{base}$.
  Let $G_s, G_{s+1}, \ldots, G_{e-1}, G_e$ be a sequence of steps between two
  occurrences of $M_{base}$, i.e.
  $M_0 \times \{\varnothing\} \goesto[G_1]\goesto[G_2]\cdots M_{base}\goesto[G_s]\cdots\goesto[G_e]M_{base}\cdots\goesto[G_n]$.

  Using \reflem{dependenceinloop} the transitions of the steps $G_s$ to $G_e$ can 
  be partitioned into subsets $T_X$ based
  on the dependencies of the tokens they produced and consumed.
  A set $T_X$ includes all transitions producing
  $X$-dependent, $\Act\setminus X$-independent tokens.
  By firing $G_s\cap T_{\{a\}}, G_{s+1}\cap T_{\{a\}}, \ldots, G_e\cap T_{\{a\}}$
  (skipping empty steps) repeatedly, $M_{base} \Goesto[a^m]$.
  By firing $G_s\cap T_{\{c\}}, G_{s+1}\cap T_{\{c\}}, \ldots, G_e\cap T_{\{c\}}$
  (skipping empty steps) repeatedly, $M_{base} \Goesto[c^m]$.

  We now search for the marking, where the decision to fire $b$ is made.

  Assume a reachable marking $M''$ of $N$ with $M'' \Goesto[a^m]$. If $M'' \not\Goesto[c^m]$ this holds for all $M'''$ reachable from $M''$ since $c$ cannot be enabled using
  tokens produced by a transition labelled $a$ or $b$.
  Otherwise there would exist a pomsets of $N$ in which a $c$ is causally dependent on an $a$ or $b$. Such a
  pomset however does not exist for the net of \reffig{counterexample} thereby violating the
  assumption of completed pomset trace equivalence. 
  If however $c$ is not re-enabled after $M''$ a maximal process including
  finitely many $c$ but infinitely many $a$'s can be produced also leading to a pomset not present
  in the net of \reffig{counterexample}.
  The same argument can be applied with the r\^oles of $a$ and $c$ reversed, hence
  $M'' \Goesto[a^m]$ iff $M'' \Goesto[c^m]$.

  We start from $M_{base}$ and start to fire the steps $G_s$, $G_{s+1}, \ldots, G_n$
  until $a^m$ cannot be fired any more for the first time. This step always
  exists as after $b$ no further $a$'s or $c$'s may be fired.
  Call the single transition in that step $t_b$.
  The marking right before that transition fired, we call $M$, the one right after it $M'$.
  Not only $M \Goesto[a^m]$ but also $M \Goesto[c^m]$ and not only $M' \not\Goesto[a^m]$ but also
  $M' \not\Goesto[c^m]$, as both $M$ and $M'$ are reachable markings.

  $t_b$ is not itself labelled $b$,
  as the refined net has a $\tau$-transition before the $b$, and once a token
  resides on the intermediate place, no $a$-transitions can be fired any more,
  as otherwise a pomset where an $a$ which is not a causal predecessor to a $b$ would be produced,
  again not existing for the net of \reffig{counterexample}.

  To disable the trace $a^m$, the transition $t_b$ needed to consume a token. If $t_b$ had not fired,
  some $G_i\cap T_{\{a\}}$, $s \leq i \leq e$ could have consumed that token,
  hence that token must be $a$-dependent, $c$-independent.
  Similarly, $t_b$ must have consumed a token which could have led to $c^m$.
  This token needs to be $c$-dependent, $a$-independent.
  Hence $t_b$ has at least two preplaces, which in turn are also preplaces to two different
  transitions, call them $t_a$ and $t_c$, which then lead to $a^m$ and $c^m$ respectively.%
  \footnote{The removal of the token leading to $a^m$ and the one leading to $c^m$ must indeed be done
  by a single transition $t_b$ as only a single transition was fired between $M$ and $M'$ and both
  traces were possible in $M$ but impossible in $M'$.}
  As they have common preplaces $t_a$, $t_b$ and $t_c$ are on the same location.

  From $M$ the net can fire $a^m$ consuming only $a$-dependent,
  $c$-independent tokens. It can also fire $c^m$ consuming only $c$-dependent, $a$-independent tokens. 

  Hence there is a sequence of steps leading from $M$ to a marking where $t_a$ is enabled, yet
  only $a$-dependent, $c$-independent tokens have been removed or added.
  Similarly there is a firing sequence leading from $M$ to a marking where $t_c$ is enabled, yet
  only $c$-dependent, $a$-independent tokens have been removed or added.
  As they change disjunct sets of tokens, these two firing sequences can be concatenated, thereby
  leading to a marking where $t_a$ and $t_c$ are concurrently enabled, yet they are on the same location,
  thereby violating the implementation requirements.
\end{proof}

\noindent
Note that the self-loops of the counterexample are not critical to the success of the proof.

This paper only considered 1-safe nets as possible implementations. We conjecture however, that the
proof of \refthm{main} can be extended to non-safe nets as well, as from a place where tokens of different
dependency mix, a transition can always choose the most-dependent token.
In particular a transition intended to produce independent tokens cannot have such a place as a preplace.
Hence every part of the net providing
independent tokens can do so without depending on firings of labelled transitions.
The number of independent tokens produced on a place where a labelled transition consumes
them is thus either finite over every run of the system, or unbounded even without any labelled
transition ever firing. In both cases that place is unsuitable for disabling a potentially
infinitely often occurring loop. If only finitely many tokens are produced, the loop can no longer
happen infinitely often, if an unbounded number of tokens can be produced, no disabling can
be guaranteed.

\section{Conclusion}\label{sec-conclusion}

A review of existing literature in the related area can be found in
\cite{glabbeek08syncasyncinteraction}, nonetheless we wish to refer the reader also to
\cite{hopkins91distnets}, where instead of requiring the equivalence between
specification and implementation to
preserve parallelism, more structural resemblance of the implementation to the
specification is required.

A paper not covered earlier is \cite{badouel02distributing}, where an algorithm for the
automated synthesis of distributed implementations of protocols is presented.
The notion of distributed Petri nets employed therein differs from ours by not
requiring formally that no parallelism may occur on the same location. The authors
however finally generate a finite automaton for each location, again serialising all
actions on a single location. In contrast to the present paper and similar to
\cite{hopkins91distnets}, the authors start with a user-supplied map from
events to locations, and answer the concrete problem of whether that specific
distribution is realisable or not instead of requiring the maximal possible
parallelism to be realised.

Comparing the proof of \refthm{main} with the proof in
\cite{glabbeek08syncasyncinteraction} we observe that the counterexample in
both proofs is based on two conflicts overlapping by a transition, i.e., on what is
therein called a
fully reachable pure \textbf{M}. In the synchronous setting such an overlapping
conflict is solved by the simultaneous removal of tokens on different places in
the preset. In an asynchronous setting these two conflicts have to be
distributed over at least two locations. Intuitively, the problem with such a
distribution is that it prevents the simultaneously solution of the original
overlapping conflicts. Instead these two conflicts have to be solved
in some order. This order must, as done within the encoding presented in
\cite{schicke09synchrony}, be enforced by the encoding,
leading to additional causal dependencies.

The present paper adds another patch to the emerging map of the separation plane
between those equivalences from the spectrum of behavioural equivalences which allow
asynchronous implementation in general and those which do not.
In \cite{glabbeek08syncasyncinteraction} we showed that Petri nets cannot in general be
implemented up to step readiness equivalence, thereby giving an upper bound for distributability
along the branching-time dimension. The present paper provided an upper bound on
the dimension of causality.
We did not formally proof that this bound is tight, and one might imagine that a
behavioural equivalence closer to the notion of dependency markings exists. However,
we were unable to find an equivalence which is sensitive to the local deadlock
problem outlined in \reffig{deadlock} and is not based on processes.
The implementation of \cite{schicke09synchrony} can
serve as a lower bound on both dimensions.
It would be interesting to answer
the implementability question for systems which feature real-valued time, thereby
enabling timeout detection and simultaneous action without co-locality.

That the observed effects are not peculiarities of the Petri net model of systems but
a reality of asynchronous systems in general is underlined by the existence of an
companion paper \cite{peters11asynchronouspi}, giving a result similar to the one
achieved here in the setting of the asynchronous $\pi$-calculus.

A closer look on the proof in \cite{peters11asynchronouspi} reveals that this
proof depends on counterexamples that are so called symmetric networks
including mixed choices in a similar way as our result depends on counterexamples including a
pure \textbf{M}. A symmetric network -- for instance $
R = \overline{a} + b + b.\checkmark \mid \overline{b} + a + a.\checkmark $ in
the second part of the proof -- consists of some
parallel processes that differ only due to some permutation of names. In
combination with mixed choice, i.e., a choice between input as well as output
capabilities, symmetric networks result in conflicting steps on different
links. Hence in both cases the counterexamples refer to some situation in the
synchronous setting in which there are two distinct but conflicting steps. To
solve this conflict two simultaneous activities are necessary -- in case of Petri
nets two tokens are removed simultaneously and in case of the $ \pi $-calculus
two sums are reduced simultaneously in one step. In the asynchronous setting
this simultaneous solution has to be serialised by some kind of lock. It
blocks the enabling of the asynchronous implementations of source steps, such
that no two implementations of conflicting source steps are enabled
concurrently. In both formalisms, Petri nets and the $ \pi $-calculus, it is
this temporally blocking of the implementation of source steps, necessary to
avoid deadlock or divergence in case of conflicting source steps, that leads to
additional causal dependencies.

Apart from this apparent similarity however, much of the relation between the two
results remains mysterious to us. To begin with, the requirements imposed on
Petri net implementations and $\pi$-calculus implementations take wildly different
forms. Additionally, in contrast to the $\pi$-calculus result, the present paper
connected implementation and original by means of behaviour only without any
reference to the system structure. The $\pi$-calculus result on the other hand
had no need to give special
attention to infinite implementations. Finally, we also have no explanation for why
the difference in expressive power (the $\pi$-calculus is turing-complete) should
not make a difference for results such as this. We hope to answer some of these
questions in future work.

The question up to which behavioural equivalence \emph{general} Petri nets are
implementable can also be reversed into the question what properties or substructures of a Petri net
make it unimplementable. One problematic structure for causal equivalences,
identified in this paper, is the net of \reffig{counterexample}, possibly with a
more elaborate route from $a$ and $c$ back to the marking enabling all three transitions.
We did not prove that no fundamentally different problematic structures exists,
but we conjecture that this is indeed the case.

\bibliographystyle{eptcs}
\bibliography{petri}

\end{document}